\documentclass[11pt]{article}
\usepackage{fullpage}
\usepackage{amsfonts,amsmath,amssymb,latexsym,comment}

\ifx\pdftexversion\undefined
  \usepackage[a4paper,colorlinks,linkcolor=black,filecolor=black,citecolor=black,urlcolor=black,pdfstartview=FitH]{hyperref}
\else
  \usepackage[a4paper,colorlinks,linkcolor=blue,filecolor=blue,citecolor=blue,urlcolor=blue,pdfstartview=FitH]{hyperref}
\fi
\usepackage{epsfig}
\usepackage{graphicx}
\usepackage{xspace}
\usepackage{subfigure}

%joe8: reset sizing
%\setlength{\marginparwidth}{40pt} \setlength{\marginparsep}{20pt}
%\setlength{\oddsidemargin}{-.3in} \setlength{\evensidemargin}{0in}
%\addtolength{\textwidth}{2.0in} \setlength{\topmargin}{-80pt}
%\addtolength{\textheight}{2.5in}
%

%\setlength{\evensidemargin}{0in} \setlength{\oddsidemargin}{0in}
%\setlength{\textwidth}{6.6in} \setlength{\textheight}{9in}
%\setlength{\topmargin}{0in} \setlength{\headheight}{0in}
%\setlength{\headsep}{0in} \setlength{\itemsep}{-\parsep}

%\setlength{\itemsep}{0pt}
%\renewcommand{\topfraction}{.9}
%\renewcommand{\textfraction}{.1}
%\setlength{\parskip}{\smallskipamount}

\newtheorem{theorem}{Theorem}
\newtheorem{claim}{Claim}
\newtheorem{corollary}{Corollary}

\newcommand{\namedref}[2]{\hyperref[#2]{#1~\ref*{#2}}}
\newcommand{\sectionref}[1]{\namedref{Section}{#1}}

\newcommand{\figureref}[1]{\namedref{Figure}{#1}}
\newcommand{\claimref}[1]{\namedref{Claim}{#1}}
\newcommand{\lemmaref}[1]{\namedref{Lemma}{#1}}

\newcommand{\tableref}[1]{\namedref{Table}{#1}}

\newcommand{\hide}[1]{}
%\newcommand{\hide}[1]{#1}

%\renewcommand{\alert}[1]{}

%\renewcommand{\bo}[1]{}

%\renewcommand{\dd}[1]{}

%\renewcommand{\ezra}[1]{}

%\renewcommand{\ad}[1]{}

%\renewcommand{\ddtocheck}[1]{}

%\renewcommand{\alerttocheck}[1]{}

%%
%% include these lines for printing a draft (wide margins and double spaced)
%%
%\iffalse
%\setlength{\textwidth}{5.5in}
%\addtolength{\rightmargin}{5cm}
%\renewcommand{\baselinestretch}{1.25}
%\setlength{\marginparwidth}{1.50in}
%\fi
%\renewcommand{\baselinestretch}{0.9}

%%
%% The following commands define and indent theorem-like environments
%%
% indentation
\newcommand{\tb}{\makebox[0.4cm]{}}
\newcommand{\due}{\makebox[0.8cm]{}}

\def\beginsmall#1{\vspace{-\parskip}\begin{#1}\itemsep-\parskip}
\def\endsmall#1{\end{#1}\vspace{-\parskip}}

\usepackage{xspace}

\newcommand\V{\mathcal{V}}

%\newcommand\lessspace{\vspace{-3mm}}
%\newcommand\lessspacet{\vspace{-5mm}}

%\newsavebox{\theorembox}
%\newsavebox{\lemmabox}
%\newsavebox{\conjecturebox}
%\newsavebox{\claimbox}
%\newsavebox{\factbox}
%\newsavebox{\corollarybox}
%
%\newsavebox{\propositionbox}
%\newsavebox{\examplebox}
%\savebox{\theorembox}{\bf Theorem}
%
%\savebox{\lemmabox}{\bf Lemma}
%
%\savebox{\conjecturebox}{\bf Conjecture}
%
%\savebox{\claimbox}{\bf Claim} \savebox{\factbox}{\bf Fact}
%
%\savebox{\corollarybox}{\bf Corollary}
%
%\savebox{\propositionbox}{\bf Proposition}
%
%\savebox{\examplebox}{\bf Example}
%\newtheorem{theorem}{\usebox{\theorembox}}
%\newtheorem{lemma}{\usebox{\lemmabox}}[section]

%\newcounter{corollary}[theorem]
%\newtheorem{corollary}[lemma]{\usebox{\corollarybox}}
%\newtheorem{proposition}[lemma]{\usebox{\propositionbox}}
%\newtheorem{example}{\usebox{\examplebox}}
%\newcounter{notes}[definition]
%\newtheorem{notation}[notes]{{\sc Notation}\rm }

%\newtheorem{remark}{{\sc Remark}\rm }[section]
%\newtheorem{conjecture}{{\sc Conjecture}\rm }[section]

%\newtheorem{definitions}[definition]{{\sc Definitions\rm }}

\newtheorem{remark}{{\sc Remark}\rm }[section]

\newcommand{\ignore}[1]{}
\newcommand{\byzantine}[0]{\mbox{\emph{Byzantin}\hspace{-0.15em}\emph{e}}\xspace}

\newcommand{\gradecast}{\textsc{gradecast}\xspace}

\newcounter{linenumbers}
\newcommand{\linenumberspecific}[1]{{\tt #1}}
\newcommand{\linenumber}{\stepcounter{linenumbers}\linenumberspecific{\arabic{linenumbers}:}}
\newcommand{\lineref}[1]{Line~{\tt #1}}
\newcommand{\ie}{\emph{i.e.,\ }}

\newcommand{\Ie}{\emph{I.e.,\ }}

%*****
%\renewcommand{\baselinestretch}{0.95}

\def\squarebox#1{\hbox to #1{\hfill\vbox to #1{\vfill}}}
\newcommand{\qed}{\hspace*{\fill}
\vbox{\hrule\hbox{\vrule\squarebox{.667em}\vrule}\hrule}\smallskip}

\newenvironment{proof}{\noindent{\bf Proof:~~}}{\(\qed\)}

%\newfont{\msym}{msbm10}

%\newcommand{\case}[1]{{\small \sc Case #1:}}

\newcommand{\simpleAgree}{\textsc{ByzConsensus}\xspace}
\newcommand{\approxAlg}{\textsc{ApproxAgree}\xspace}

\newcommand{\floor}[1]{\left\lfloor#1 \right\rfloor}
\newcommand{\set}[1]{\left\{#1 \right\}}
\newcommand{\sog}[1]{\left(#1 \right)}
\newcommand{\tri}[1]{\left<#1 \right>}
\newcommand{\abs}[1]{\left|#1 \right|}
\newcommand{\avg}{\mbox{\it AVG}\xspace}

%endcommands

\begin{document}

\title{Simple Gradecast Based Algorithms}

\author{Michael Ben-Or\footnotemark[2] \and Danny Dolev\footnotemark[2]\and Ezra N. Hoch\footnotemark[2]}
\maketitle

\renewcommand{\thefootnote}{\fnsymbol{footnote}}

\footnotetext[2]{The Hebrew University of Jerusalem, Jerusalem, Israel}
\renewcommand{\thefootnote}{\arabic{footnote}}

\begin{abstract}
  Gradecast is a simple three-round algorithm presented by Feldman and Micali. The current work
  presents a very simple synchronous algorithm that utilized Gradecast to achieve \byzantine agreement.
  Two small variations of the presented algorithm lead to improved algorithms for solving the Approximate agreement problem and the Multi-consensus problem.

  An optimal approximate agreement algorithm was presented by Fekete, which supports up to $\frac{1}{4}n$ \byzantine
  nodes and has message complexity of $O(n^k)$, where $n$ is the number of nodes and $k$ is the number of rounds.
  Our solution to the approximate agreement problem is optimal, simple and reduces the message complexity
  to $O(k \cdot n^3)$, while supporting up to $\frac{1}{3}n$ \byzantine nodes.

  Multi consensus was first presented by Bar-Noy {\it et al}. It consists of consecutive executions of
  $\ell$ \byzantine consensuses. Bar-Noy {\it et al.,} show an optimal amortized solution to this problem,
  assuming that all nodes start each consensus instance at the same time, a property that cannot be guaranteed with early stopping.
  Our solution is simpler, preserves round complexity optimality, allows early stopping and does not require synchronized starts of the consensus instances.
\end{abstract}

\section{Introduction}
\byzantine consensus \cite{citeulike:85198} is one of the fundamental problems in the field of distributed algorithm.
Since its appearance it has been the focus of much research and many variations of the \byzantine consensus
problem have been suggested (see \cite{LynchBook, 983102}). In the current work we are interested in the \byzantine consensus problem and two such variations: multi consensus and approximate agreement.

In the \byzantine consensus problem each node $p$ has an input value $v_p$, and all non-faulty nodes
are required to reach the same output value $v$ ({\it ``agreement''}), s.t. if all non-faulty nodes have the same input value $v'$ then the output value is $v'$, \ie $v=v'$ ({\it ``validity''}).

Approximate agreement \cite{Rapprox} aims at reaching an agreement on a value from the Real domain, s.t. the output
values of non-faulty nodes are at most $\epsilon$ apart; and are within the range of non-faulty nodes' inputs. The multi consensus problem \cite{OptimalAmortized} consists of sequentially executing $\ell$ \byzantine consensuses one after the other.

The first two problems can be solved in a way that overcomes the $O(t)$ round complexity lower bound of \byzantine consensus \cite{fischer1982lbt}, where $t$ is the number of faulty nodes. Approximate agreement overcomes the lower bound by relaxing the ``validity'' property.
Regarding multi consensus, it is reasonable to think that the $O(t)$ lower bound leads to an $O(\ell \cdot t)$
lower bound for multi consensus. However, \cite{OptimalAmortized} shows how to solve $\ell$ sequential \byzantine consensuses in $O(\ell + t)$ rounds, assuming synchronized starts of the different consensuses instances.

In all three problems it is interesting to compare the round-complexity when there are $f<t$ failures. That is,
it is known that $t$ must be $< \frac{1}{3}n$ ($n$ is the number of nodes in the system). However, what if
in a specific run there are only $f < t$ failures? Can the \byzantine consensus / approximate agreement / multi consensus problem be solved quicker? The answer is ``yes'' on all three accounts.
The property of terminating in accordance to the actual number of failures $f$ is termed ``early-stopping''.
The three solutions presented in this paper all have the early-stopping property.

The solutions presented herein all use Gradecast as a building block. Gradecast was first presented in \cite{62225},
and has been used in many papers since (for example, \cite{citeulike:7123123}).
In Gradecast a single node gradecasts its value $v$, and each non-faulty node $p$ has a pair of output values:
a value $v_p$ and a confidence $c_p \in \set{0,1,2}$. The confidence $c_p$ provides information regarding the
$v_q$ values obtained at other non-faulty nodes, $q$ (see more in \sectionref{sec:gradecast}), and thus allows $p$ to
reason about $q$'s output value.

Specifically, we use Gradecast to detect faulty nodes and ignore them in future rounds.
The idea to try and identify faulty nodes and ignore them
in future rounds (not necessarily using Gradecast) has been around for some time (for example \cite{DBLP:conf/podc/AbrahamDH08, OptimalAmortized,96565,citeulike:6589507,FastAgree87,GM}). \cite{OptimalAmortized} uses it to achieve efficient sequential composition of \byzantine consensuses. In \cite{citeulike:6589507} a similar notion of ``identifying''
faulty nodes and ignoring them is used to efficiently solve the approximate agreement problem.
In essence, our usage of Gradecast ``transforms'' Byzantine failures into crash failures.

Gradecast provides a simplification of the above notion. By using Gradecast we ensure that either a \byzantine node $z$ discloses its faultiness, or all non-faulty nodes see the same message from $z$. By using a very simple
iterative algorithm we solve \byzantine consensus problem, multi consensuses and approximate agreement. All solutions are simple, optimal in their resiliency ($t < \frac{1}{3}n$), stop-early and optimal in their running time (up to a
constant factor induced by using Gradecast in each iteration). Moreover, for the approximate agreement and multi consensuses, our solutions improve upon
previously known solutions.

\subsection{Related Work}
{\bf Approximate Agreement:}
Approximate agreement was presented in \cite{Rapprox}. The synchronous solution provided in \cite{Rapprox} supports $n > 3t$ and the convergence rate is $\frac{1}{\floor{\frac{n-2t+1}{t}}+1}$ per round, which asymptotically  is $\sim \left(\frac{t}{n-2t}\right)^k$ after $k$ rounds. To easily compare the different algorithms, we consider the number of rounds it takes to reach convergence of $\frac{1}{n}$. For \cite{Rapprox}, within $O\left(\log n\right)$ rounds the algorithm ensures all non-faulty nodes have converged to $\frac{1}{n}$.
The message complexity of \cite{Rapprox} is $O(n^2)$ per each round of the $k$ rounds.

In \cite{citeulike:6589507} several results are presented.
First, for \byzantine failures there is a solution that tolerates $n > 4t$ and converges to $\frac{1}{n}$
within $O\left(\frac{\log n}{\log \log n}\right)$ rounds.
For crash failures, \cite{citeulike:6589507}
provides a solution tolerating $n > 3t$ that converges to $\frac{1}{n}$ within $O\left(\frac{\log n}{\log \log n}\right)$ rounds. The message complexity of both algorithms is $O(n^k)$. Moreover, \cite{citeulike:6589507} shows a lower bound for the \byzantine case of
$O\left(\frac{\log n}{\log \log n}\right)$ rounds to reach $\frac{1}{n}$ convergence.

Using failure-transformers, the crash resistant algorithm from \cite{citeulike:6589507} can be transformed into a \byzantine resistant algorithm (for example \cite{62588}). Such a translation has a constant multiplicative overhead in the round complexity. The transformed algorithm is tolerant to $n > 3t$ and has the original convergence rate up to a constant factor.

\cite{ZamskyPHD} solves the approximate agreement problem while tolerating $n > 3t$ \byzantine failures;
it converges to $\frac{1}{n}$ within $O\left(\frac{\log n}{\log \log n}\right)$ rounds. Moreover, \cite{ZamskyPHD}
presents algorithms with short messages for small ratios of \byzantine nodes ($\frac{n}{t} \rightarrow \infty$), but when $n > 4t$.
it requires exponential message size.

The solution presented in the current paper has a better convergence rate than that of \cite{Rapprox};
it has a higher \byzantine tolerance ratio than that of \cite{citeulike:6589507, ZamskyPHD} (\ie $n >3t$ instead of $n > 4t$) and also has an exponential improvement in the message complexity over that of \cite{ZamskyPHD} and \cite{citeulike:6589507} (from $O(n^k)$ to $O(k \cdot n^3)$). Moreover, the presented solution is simple and has a shorter presentation and much simpler proofs than the solutions of \cite{citeulike:6589507, ZamskyPHD}.

%\hide{
\begin{table}
\center{
  \caption{Comparison of different approximation algorithms}
  \begin{tabular}{ |p{2.15in}|| c| c| c| c|}
  \hline
& Rounds & Resiliency & Message comp. & Early-stopping?\\
\hline
\cite{Rapprox}'s approximate agreement & $O\left(\log n\right)$ & $n > 3 \cdot t$ & $O(k \cdot n^2)$ & \checkmark\\
\hline
\cite{citeulike:6589507}'s ``direct'' algorithm & $O\left(\frac{\log n}{\log \log n}\right)$ & $n > 4 \cdot t $ & $O(n^k)$ & $$\\
\hline
\cite{citeulike:6589507}'s ``indirect'' algorithm (crash-failure + transformation) & $O\left(\frac{\log n}{\log \log n}\right)$ & $n > 3 \cdot t $ & $O(n^k)$ & $$\\
\hline
\cite{ZamskyPHD}'s approximate algorithm & $O\left(\frac{\log n}{\log \log n}\right)$ & $n > 3 \cdot t $ & $O(n)$ as $\frac{n}{f}\rightarrow \infty$ & $$\\
\hline
current & $O\left(\frac{\log n}{\log \log n}\right)$ & $n > 3 \cdot t$ & $O(k \cdot n^3)$ & \checkmark\\
\hline
\end{tabular}
  \label{table:summaryApprox}
}
\end{table}
%}% hide
\mbox{}\\
\noindent {\bf Multi Consensus:}
The algorithm {\sc Multi-Consensus} presented in \cite{OptimalAmortized} solves $\ell$ sequential \byzantine
consensuses within $O(t + \ell)$ rounds and is resilient to $n > 3t$. However, \cite{OptimalAmortized} assumes that the starts of the different $\ell$ consensuses are synchronized, a property that cannot be ensured when a consensus stops early.
In the current paper we show how to adapt ideas from \cite{citeulike:3153966} such that our solution
does not require synchronized starts of the different consensuses.

\hide{See \tableref{table:summary} for a summary of the comparison.}

\hide{
\begin{table}
 \center{
  \caption{Comparison of current work with {\sc Multi-Consensus} \cite{OptimalAmortized}}
  \begin{tabular}{ |l|| c| c| c|}
  \hline
& Rounds & Total bits & Supports unsynchronized starts?\\
\hline
{\sc Multi-Consensus} \cite{OptimalAmortized} & $O\left(t+ \ell\right)$ & $O\left(n \cdot t^3 + \ell \cdot n \cdot t\right)$ & $$\\
\hline
current & $O\left(t+ \ell\right)$ & $O\left(n \cdot t^3 + \ell \cdot n \cdot t^2\right)$ & \checkmark\\
\hline
\end{tabular}
  \label{table:summary}
}
\end{table}
} %hide

%\subsection{Contribution}
In summary, a main contribution of this work is its simplicity. Using gradecast as a building block we present a very simple basic algorithm that solves the \byzantine consensus problem and two small variations of it that solve multi consensus and approximate agreement. All three algorithms support $n > 3t$, have the early-stopping property and are asymptotically optimal in their running time (up to a constant multiplicative factor). Aside from the simplicity, following are the properties of the presented algorithms:
\beginsmall{enumerate}
  \item The basic algorithm solves the \byzantine consensus problem  and terminates within\\ $3 \cdot \min \set {f+2, t+1}$ rounds.
  \item  The first variation solves the approximate agreement problem, with convergence rate of $\sog{\frac{t}{n-2t}}^k \cdot \frac{1}{k^k}$, per $3 \cdot k$ rounds (\ie within $O(\frac{\log n}{\log \log n})$ rounds it converges to $\frac{1}{n}$). The message complexity is $O(k \cdot n^3)$ per $k$ rounds, as opposed to $O(n^k)$ of the previous best known results. Moreover,
    the solution dynamically adapts to the number of failures at each round.
  \item  The second variation solves $\ell$ sequential \byzantine consensuses within $O(t + \ell)$ rounds, and  efficiently overcomes the requirement of synchronized starts
      of the consensus instances (a requirement assumed by \cite{OptimalAmortized}).
\endsmall{enumerate}

%\subsection{Overview}
We start with \sectionref{sec:model} that presents the assumed model.
In \sectionref{sec:simple} the basic algorithm is presented and is proved to solve the \byzantine consensus problem.
The proofs are straightforward and are used as an intuitive introduction to the basic Gradecast schema.
\sectionref{sec:approx} presents a variation of the algorithm that solves the approximate agreement.
\sectionref{sec:consecutive} describes how to solve the multi consensus problem.
Lastly, \sectionref{sec:conclude} summarizes and concludes the work.

\section{Model}\label{sec:model}
The system consists of $n$ nodes, out of which up to $t < \frac{1}{3}n$ may be \byzantine, \ie behave
arbitrarily and collude together. Denote by $f \leq t$ the actual number of faulty nodes in a given run.
Communication is assumed to be synchronous and is done via message passing.
The communication graph is complete graph .

The \byzantine consensus problem consists of each node $p$ having an input value $v_p$ from a finite set $\V$ (\ie $v_p \in \V$).
%$v_p \in \set{0,1}$.
Each node $p$ also has an output value $o_p \in \V$. Two properties should hold:
\beginsmall{enumerate}
  \item {\it ``agreement''}: $o_p=o_q$ for any two non-faulty nodes $p,q$ (thus we can talk about the output value of the algorithm);
  \item {\it ``validity''}: if all non-faulty nodes start with the same input value $v$, then the output value of the algorithm is $v$.
\endsmall{enumerate}

%\begin{remark}
%  It is possible to add support of multi-valued \byzantine consensus to every binary-valued consensus by using techniques from \cite{turpin1984ebb}.
%\end{remark}
%\dd{why don't we do it right away? why should we be binary?}\ezra{I don't see any problem. I've updated the algorithm and proofs. Please check that it is ok}

\begin{figure*}[t]\center

\begin{minipage}{5.0in}
\hrule \hrule \vspace{1.7mm} \footnotesize
%\begin{alltt}
\setlength{\baselineskip}{3.9mm} \noindent Algorithm \gradecast$(q,\mbox{{\sc ignore}}_p)$
 \vspace{1mm} \hrule \hrule
\vspace{1mm}

\begin{tabular}{ r l }

& \textit{/* Initialization */}\hfill\textit{/* executed on node $p$ with leader node $q$*/}\\
\linenumber & {\bf set} ignore all messages being received below from nodes in $\mbox{\sc ignore}_p$;\\
\linenumber & {\bf if} $p=q$ {\bf then} $v=$ `the input value';\\
\\
& \textit{/* Dissemination */}\\
\linenumber & {\bf round 1} The leader $q$ sends $v$ to all:\\
\linenumber & {\bf round 2} $p$ sends the value received from $q$ to all;\\
\\
& \tb \textit{/* Notations */}\\
\linenumber & \tb {\bf let} $\tri{j,v_j}$ represent that $p$ received $v_j$ from $j$;\\
\linenumber & \tb {\bf let} $maj$ be a value received the most among such values;\\
\linenumber & \tb {\bf let} $\#maj$ be the number of occurrences of $maj$;\\
\\
& \textit{/* Support */}\\
\linenumber & {\bf round 3} {\bf if} $\#maj\ge n-t$ {\bf then} $p$ sends $maj$ to all;\\
\\
& \tb \textit{/* Notations */}\\
\linenumber & \tb {\bf let} $\tri{j,v'_j}$ represent that $p$ received $v'_j$ from $j$;\\
\linenumber & \tb {\bf let} $maj'$ be a value received the most among such values;\\
\linenumber & \tb {\bf let} $\#maj'$ be the number of occurrences of $maj'$;\\
\\
& \tb \textit{/* Grading */}\\
\linenumber & \tb {\bf if} $\#maj'\ge n-t$ {\bf set} $v_p := maj'$ and $c_p := 2$;\\
\linenumber & \tb {\bf otherwise, if} $\#maj'\ge t+1$ {\bf set} $v_p := maj'$ and $c_p := 1$;\\
\linenumber & \tb {\bf otherwise} {\bf set} $v_p=\perp$ and $c_p=0$;\\
\\
\linenumber & {\bf return} $\tri{q,v_p,c_p}$;
\end{tabular}

\vspace{1mm}
\hrule \hrule
\end{minipage}

 \caption{\gradecast: The Gradecast protocol}\label{figure:gradecast}
\end{figure*}

\subsection{Gradecast}\label{sec:gradecast}
Gradecast \cite{62225} is a distributed algorithm that ensures some properties that are similar to those of broadcast.
Specifically, in Gradecast there is a sender node $p$ that sends a value $v$ to all other nodes. Each node $q$'s
output is a pair $\tri{v_q, c_q}$ where $v_q$ is the value $q$ thinks $p$ has sent and $c_q$ is $q$'s confidence
in this value. The Gradecast properties ensure that:
\beginsmall{enumerate}
  \item if $p$ is non-faulty then $v_q=v$ and $c_q=2$, for every non-faulty $q$;
  \item for every non-faulty nodes $q, q'$: if $c_q > 0$ and $c_{q'} > 0$ then $v_q=v_{q'}$;
  \item $|c_q-c_{q'}| \leq 1$ for every non-faulty nodes $q, q'$.
\endsmall{enumerate}

The protocol in \figureref{figure:gradecast} is basically the original protocol presented in~\cite{62225} with explicit handling of boycotting messages coming from nodes known to be faulty.

\begin{theorem}\label{thm:gradecast}
  There is a 3 round Gradecast algorithm.
\end{theorem}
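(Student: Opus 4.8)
The plan is to check that the three-round protocol of \figureref{figure:gradecast} satisfies the three Gradecast properties listed above; the round count is immediate, since only rounds $1$, $2$ and $3$ send messages. Throughout I will use repeatedly that $n > 3t$, so that $n - t > 2t$, $n - 2t \ge t+1$, and $n - 3t \ge 1$, together with the fact that a non-faulty node sends identical messages to all recipients in any given round. I also assume, as the protocol intends, that $\mbox{\sc ignore}_p$ never contains a non-faulty node, so boycotting never discards an honest message.

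For Property~1 (non-faulty leader $q$) I would argue that $v$ floods the system. In round~2 each of the $\ge n-t$ non-faulty nodes relays $v$; since $n - t > t$, at every non-faulty $p$ the value $v$ is the strict plurality, so $maj = v$ and $\#maj \ge n-t$. Hence every non-faulty node supports $v$ in round~3, and the same count gives $maj' = v$ with $\#maj' \ge n-t$ at each non-faulty node, so the grading rule sets $v_p = v$ and $c_p = 2$.

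The crux is a single quorum-intersection lemma underlying Property~2: \emph{all non-faulty nodes that send in round~3 send the same value.} Indeed, a non-faulty $j$ supports a value $w_j$ only when it received $w_j$ from a set of $\ge n-t$ nodes after round~2. Two such support sets (for non-faulty $j_1, j_2$ sending $w_1, w_2$) intersect in at least $2(n-t) - n = n - 2t$ nodes, of which at least $n - 3t \ge 1$ are non-faulty; such a common honest node relayed the same round-2 value to both $j_1$ and $j_2$, forcing $w_1 = w_2$. Call this common value $w$. Now if a non-faulty $q$ has $c_q > 0$, then $q$ saw $v_q = maj'$ from $\ge t+1$ round-3 senders, at least one of whom is honest and therefore sent $w$; hence $v_q = w$, and likewise $v_{q'} = w$ for any confident non-faulty $q'$, giving Property~2.

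Property~3 then follows by ruling out a confidence gap of $2$. If some non-faulty node has $c = 2$, it received $w$ from $\ge n-t$ round-3 senders, so at least $n - 2t \ge t+1$ honest nodes supported $w$; these send $w$ to everyone, so every non-faulty node counts $w$ at least $t+1$ times and thus has $c \ge 1$. Hence no non-faulty pair can exhibit confidences $2$ and $0$, and since confidences lie in $\set{0,1,2}$ this yields $|c_q - c_{q'}| \le 1$. I expect the main obstacle to be stating the intersection lemma cleanly---pinning down that the shared quorum member is honest (via $n - 3t \ge 1$) and invoking round-2 consistency---since everything else is plurality counting against the thresholds $n-t$ and $t+1$.
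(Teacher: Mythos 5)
Your proof is correct. The paper itself does not reprove Gradecast: its entire proof of \theoremref{thm:gradecast} is to point at \figureref{figure:gradecast} and observe that, provided $\mbox{\sc ignore}_p$ contains only faulty nodes for every non-faulty $p$ (an assumption you also state explicitly), the original correctness proof of Feldman and Micali applies verbatim. What you have written is, in effect, that deferred proof spelled out: Property~1 by flooding and the plurality count $n-t>t$; Property~2 via the quorum-intersection lemma that any two non-faulty round-3 senders share at least $n-2t$ round-2 witnesses, hence at least $n-3t\ge 1$ honest ones, so they support a common value $w$, and any node with positive confidence saw an honest supporter of $w$ among its $t+1$ witnesses; Property~3 by noting that a confidence-$2$ node certifies at least $n-2t\ge t+1$ honest supporters of $w$, which forces $\#maj'\ge t+1$ and hence $c\ge 1$ everywhere. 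All the threshold arithmetic checks out under $n>3t$. The only difference from the paper is one of economy: the paper buys brevity by citation, while your version makes the theorem self-contained, which is arguably more useful here since the intersection lemma is exactly the mechanism the rest of the paper leans on when it argues that a \byzantine node can produce contradicting gradecast values at most once.
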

\begin{proof}
 \figureref{figure:gradecast} presents such an algorithm.  Assuming that at initiation {\sc ignore}$_p$, for every non-faulty node $p$, contains only faulty nodes, then the proof of~\cite{62225} holds.
\end{proof}

 The implementation in \figureref{figure:gradecast}  implies the following claim.

\begin{claim}\label{claim:ignore}
If the leader $q$, $q\in \mbox {\sc ignore}_p$ for every non-faulty $p$  then following the completion of \gradecast, $c_p=0$ for every non-faulty $p$.
\end{claim}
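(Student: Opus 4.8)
The plan is to trace what every non-faulty node actually transmits in rounds~2 and~3, and then simply read off the confidence from the grading rule on Lines~12--14. The single structural fact that drives the whole argument is the hypothesis that the leader $q$ lies in $\mbox{\sc ignore}_p$ for \emph{every} non-faulty $p$: by Line~1 each such $p$ discards $q$'s round-1 message, so $q$ contributes nothing to the value $p$ would otherwise relay.

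First I would show that no non-faulty node relays a genuine value in round~2. Since every non-faulty $p$ discards what it receives from $q$, $p$ has no value-received-from-$q$ to forward on Line~4, so non-faulty nodes inject nothing into the round-2 traffic. Consequently the pairs $\tri{j,v_j}$ that any non-faulty $p$ collects originate only from the at most $t$ \byzantine nodes, whence $\#maj \le t$. Because $n > 3t$ we have $t < n-t$, so the test $\#maj \ge n-t$ on Line~8 fails at every non-faulty node, and therefore \emph{no} non-faulty node sends anything in round~3.

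Next I would repeat the same counting one level up. In round~3 the only senders are \byzantine nodes, so the pairs $\tri{j,v'_j}$ collected by a non-faulty $p$ again number at most $t$, giving $\#maj' \le t < t+1$. Both thresholds in the grading rule then fail: $\#maj' \ge n-t$ fails (Line~12) and $\#maj' \ge t+1$ fails (Line~13), so Line~14 fires and $c_p = 0$. As $p$ was an arbitrary non-faulty node, $c_p = 0$ for all of them, which is exactly the claim.

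The one place that needs care --- and the main obstacle --- is the precise semantics of Line~4 when the value received from $q$ has been discarded. I must be sure that a non-faulty node does not relay some placeholder (for instance $\perp$) that could itself accumulate $n-t$ occurrences among non-faulty senders and thereby be graded with confidence~$2$; the argument above rests on an ignored leader producing \emph{no} relayed value at all from non-faulty nodes, so that the at most $t < n-t$ \byzantine nodes can never push either count $\#maj$ or $\#maj'$ up to its threshold.
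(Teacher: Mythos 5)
Your proof is correct and follows essentially the same route as the paper's (one-line) argument, merely unpacking it: the paper simply observes that every non-faulty node ignores $q$'s messages and therefore returns $c_p=0$, while you supply the explicit counting ($\#maj\le t<n-t$ and $\#maj'\le t<t+1$) that justifies this conclusion. Your closing caveat about the semantics of an ignored leader's value in round~2 is resolved exactly as you assume --- an ignored message is treated as never received, so nothing is relayed --- which is the reading the paper intends.
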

\begin{proof}
Every non-faulty node ignores all messages send by $q$ and as a result every non-faulty node will return $c_p=0$.
\end{proof}

\section{Simple \byzantine Consensus}\label{sec:simple}
The idea behind the \simpleAgree algorithm (\figureref{figure:simpleByzAgree}) is to use gradecast as a means of forcing the \byzantine nodes to ``lie''
at the expense of being expelled from the algorithm. That is, at each iteration a node $p$ will gradecast its own value, and then consider the values it received: a) any node that gradecasted a value with confidence $ \leq 1$ will be marked as faulty, and will be ignored for the rest of the algorithm; b) any value with confidence $\geq 1$ will be
considered, and $p$ will update its own value to be the majority of values with confidence $\geq 1$. Moreover, this mechanism ensures that for a faulty node $z$, if different non-faulty nodes consider different values for $z$'s gradecast, then at least one of them should obtain the value with zero confidence.  For example, one considers $z$ gradecasted ``0'' with confidence 1, and the other considers $z$ gradecast's confidence to be 0. The result of such a case is that all non-faulty nodes will mark $z$ to be faulty, and will remove it from the algorithm.
In other words, a \byzantine node can produce, using \gradecast, contradicting values to non-faulty nodes at most once.

\begin{figure*}[t,h!]\center
\setcounter{linenumbers}{0}
\begin{minipage}{5.0in}
\hrule \hrule \vspace{1.7mm} \footnotesize
%\begin{alltt}
\setlength{\baselineskip}{3.9mm} \noindent Algorithm $\simpleAgree$
 \vspace{1mm} \hrule \hrule
\vspace{1mm}

\begin{tabular}{ r l }

& \textit{/* Initialization */}\hfill\textit{/* executed on node $p$*/}\\
\linenumber & {\bf set} $BAD := \emptyset$;\\
\\
& \textit{/* Main loop */}\\
\linenumber & {\bf for} $r := 1$ to $t+1$ do:\\
\linenumber & \tb {\bf \gradecast}$(p,BAD)$ with input value $v$;\\
\\
& \tb \textit{/* Notations */}\\
\linenumber & \tb {\bf let} $\tri{q,v,c}$ represent that $q$ gradecasted $v$ with confidence $c$;\\
\linenumber & \tb {\bf let} $maj$ be the value received the most among values with confidence $\geq 1$;\\
& \due (if there is more than one such value, take the lowest)\\
\linenumber & \tb {\bf let} $\#maj$ be the number of occurrences of $maj$ with confidence $2$;\\

\\
& \tb \textit{/* Updates */}\\
\linenumber & \tb {\bf set} $v := maj$;\\
\linenumber & \tb {\bf set} $BAD := BAD \cup \set{q ~|~ \mbox{received $\tri{q,*,c}$ with $c \leq 1$}}$;\\
\linenumber & \tb {\bf if} $\#maj \geq n-t$ then break loop;\\
\linenumber & {\bf end for}\\
\\
\linenumber & {\bf if} executed for $< t+1$ iterations then participate in one more iteration;\\
\linenumber & {\bf return} $v$;
\end{tabular}

\vspace{1mm}
\hrule \hrule
\end{minipage}

 \caption{\simpleAgree: a simple \byzantine consensus algorithm}\label{figure:simpleByzAgree}
\end{figure*}

%\subsection{Proofs}\label{sec:simpleProofs}

Denote by $\cup BAD_r$ the union of all $BAD$ variables for non-faulty nodes at the beginning of iteration $r$.
Similarly, denote by $\cap BAD_r$ the intersection of all $BAD$ variables for non-faulty nodes at the beginning of iteration $r$.

\begin{claim}\label{claim:gradecast}
  If $\cup BAD_r$ contains only faulty nodes,
  then the properties of gradecast, following its execution in \lineref{3}, hold.
\end{claim}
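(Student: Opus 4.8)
The plan is to reduce the claim directly to \theoremref{thm:gradecast}. That theorem already establishes that the three gradecast properties hold whenever, for every non-faulty node $p$, the set $\mbox{\sc ignore}_p$ handed to the gradecast protocol contains only faulty nodes. Hence my entire task is to verify that this precondition is met by every gradecast invocation made in \lineref{3} of $\simpleAgree$; the correctness of gradecast itself is a black box I may invoke.

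First I would pin down exactly which ignore sets are in play in \lineref{3}. In $\simpleAgree$ each non-faulty node $p$ executes $\gradecast(p, BAD)$, where $BAD$ is $p$'s own local variable as it stands at the beginning of iteration $r$; call this set $BAD_p$. So the ignore set used by $p$ in the gradecast instances of iteration $r$ is precisely $BAD_p$, and this is the object to which the hypothesis of \theoremref{thm:gradecast} must be applied.

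Next I would invoke the stated assumption. By definition, $\cup BAD_r = \bigcup_p BAD_p$ with the union taken over all non-faulty nodes $p$, so $BAD_p \subseteq \cup BAD_r$ for each such $p$. Since by hypothesis $\cup BAD_r$ contains only faulty nodes, every $BAD_p$ likewise contains only faulty nodes. This is exactly the precondition required by \theoremref{thm:gradecast}; note in particular that the theorem does not ask the non-faulty nodes to agree on their ignore sets, only that each individual set be free of non-faulty nodes. The three gradecast properties therefore follow at once for the execution in \lineref{3}.

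I expect this argument to carry essentially no technical difficulty: the only thing to watch is the bookkeeping of whose ignore set is whose, i.e.\ that the ignore set passed by node $p$ is $BAD_p$ and not some global object. All the real content sits in the hypothesis ``$\cup BAD_r$ contains only faulty nodes,'' which says that no non-faulty node has ever been mistakenly inserted into any $BAD$ set, and this is precisely the guarantee that gradecast's correctness consumes. The genuinely substantive task — proving that this hypothesis in fact persists across iterations — is a separate matter that I would defer to a later inductive argument; here it is simply assumed.
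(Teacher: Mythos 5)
Your proposal is correct and follows essentially the same route as the paper: both arguments reduce to checking that each non-faulty node's ignore set ($BAD_p \subseteq \cup BAD_r$) contains only faulty nodes and then appealing to gradecast's correctness under that precondition. The paper's own proof merely adds the one-line justification for why this suffices --- ignoring messages from faulty nodes is equivalent to those nodes not sending them, which gradecast already tolerates --- a point you implicitly delegate to \theoremref{thm:gradecast}.
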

\begin{proof}
  Ignoring messages of faulty nodes does not affect the properties of gradecast, since gradecast
  works properly no matter what the faulty nodes do. Specifically, ignoring messages from faulty nodes
  is equivalent to the faulty nodes not sending those messages.
\end{proof}

\begin{claim}\label{claim:BAD}
  If $\cup BAD_r$ contains only faulty nodes, then $\cup BAD_{r+1}$ contains only faulty nodes.
\end{claim}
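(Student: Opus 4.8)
The plan is to reduce the claim to the first property of Gradecast, exploiting the fact that the $BAD$ sets only ever grow. Since the update in \lineref{8} merely enlarges $BAD$, we have $\cup BAD_{r+1} = \cup BAD_r \cup N_r$, where $N_r$ is the set of all nodes inserted into some non-faulty node's $BAD$ set during iteration $r$. By hypothesis every node of $\cup BAD_r$ is faulty, so it suffices to prove that every node of $N_r$ is faulty as well. A node $s$ enters $N_r$ precisely because some non-faulty node $p$ received $s$'s gradecast with confidence $c \le 1$ (the condition in \lineref{8}). Hence the whole claim comes down to showing that a non-faulty sender is never graded with confidence $\le 1$ by any non-faulty node.

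To establish this I would first invoke \claimref{claim:gradecast}: since by hypothesis $\cup BAD_r$ contains only faulty nodes, the Gradecast properties listed in \sectionref{sec:gradecast} hold for the execution in \lineref{3}. With those properties available, the first one applies immediately: if the sender $s$ is non-faulty, then every non-faulty node $p$ outputs confidence $c_p = 2$ for $s$'s gradecast. As $2 > 1$, node $p$ does not satisfy the insertion condition of \lineref{8} for $s$, and therefore does not add $s$ to its $BAD$ set. Since $p$ was an arbitrary non-faulty node, no non-faulty sender is ever placed in $N_r$, so every element of $N_r$ is faulty.

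Combining the two observations closes the argument: each node of $\cup BAD_{r+1} = \cup BAD_r \cup N_r$ is either an element of $\cup BAD_r$, faulty by hypothesis, or an element of $N_r$, faulty by the preceding paragraph. Thus $\cup BAD_{r+1}$ contains only faulty nodes. I do not expect a genuine obstacle here; the only point requiring care is checking that the precondition of \claimref{claim:gradecast} is actually met so that the Gradecast guarantees may be used. This is exactly the role of the hypothesis ``$\cup BAD_r$ contains only faulty nodes,'' and it is what makes the statement an inductive step: the hypothesis licenses \claimref{claim:gradecast}, which in turn yields confidence $2$ for honest senders, which finally forces every new $BAD$ entry to be faulty.
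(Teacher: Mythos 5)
Your proposal is correct and follows essentially the same route as the paper's own proof: both use the hypothesis to invoke \claimref{claim:gradecast}, then apply the first Gradecast property to conclude that every non-faulty sender receives confidence $2$ at all non-faulty nodes and hence is never added to any $BAD$ set in iteration $r$. Your write-up is merely more explicit about the decomposition $\cup BAD_{r+1} = \cup BAD_r \cup N_r$, which the paper leaves implicit.
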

\begin{proof}
  Consider a non-faulty node $q$. By \claimref{claim:gradecast}, $q$'s gradecast confidence is 2 at all
  non-faulty nodes. Thus, no non-faulty node adds $q$ to $BAD$ in the current iteration. Therefore,
  $\cup BAD_{r+1}$ contains only faulty nodes.
\end{proof}

\begin{corollary}
  The \gradecast invoked in \lineref{3} satisfies the gradecast properties, and
  $\cup BAD_r$ never contains non-faulty nodes.
\end{corollary}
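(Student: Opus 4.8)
The plan is to prove both assertions simultaneously by induction on the iteration index $r$, treating the two preceding claims as the inductive machinery. The key invariant I would carry through the induction is precisely the hypothesis shared by \claimref{claim:gradecast} and \claimref{claim:BAD}, namely that $\cup BAD_r$ contains only faulty nodes. Once this invariant is established for every $r$, both conclusions of the corollary follow immediately: the invariant is itself the second assertion, and feeding it into \claimref{claim:gradecast} yields the first.

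For the base case I would inspect the initialization in \lineref{1}, where every non-faulty node sets $BAD := \emptyset$. Hence at the start of the first iteration $\cup BAD_1 = \emptyset$, which vacuously contains only faulty nodes, so the invariant holds for $r=1$. For the inductive step, I would assume the invariant at iteration $r$, i.e.\ that $\cup BAD_r$ contains only faulty nodes, and then invoke \claimref{claim:BAD} verbatim to conclude that $\cup BAD_{r+1}$ also contains only faulty nodes. This closes the induction and establishes that $\cup BAD_r$ never contains non-faulty nodes for any $r$, which is the second assertion of the corollary.

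With the invariant in hand for all $r$, the first assertion is a direct application of \claimref{claim:gradecast}: since its hypothesis is satisfied at every iteration, the properties of gradecast hold following the execution of \gradecast in \lineref{3} in each iteration. I do not expect a genuine obstacle here, as the corollary is essentially the assembly of two already-proved claims into a single induction; the only point requiring a moment's care is recognizing that the base case rests on the explicit initialization $BAD := \emptyset$ rather than on either claim, and that the empty set ``contains only faulty nodes'' in the degenerate sense that makes the induction start correctly.
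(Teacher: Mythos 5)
Your proposal is correct and follows essentially the same route as the paper, which simply states that the corollary follows by iteratively applying \claimref{claim:gradecast} and \claimref{claim:BAD}; you have merely made the induction explicit, including the base case $\cup BAD_1 = \emptyset$ from the initialization. No further comment is needed.
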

\begin{proof}
  By iteratively applying \claimref{claim:gradecast} and \claimref{claim:BAD}.
\end{proof}

\begin{claim}\label{claim:samev}
  If at the beginning of some iteration all non-faulty nodes have the same value $v$, then all
  non-faulty nodes that are still in the main loop exit the loop and update their value to $v$.
\end{claim}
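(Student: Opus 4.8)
The plan is to combine the validity guarantee of gradecast (property~1 of \sectionref{sec:gradecast}) with a plurality count that exploits $n>3t$. Fix the iteration $r$ at whose start all non-faulty nodes hold the common value $v$. By the corollary established just above, $\cup BAD_r$ contains only faulty nodes, so \claimref{claim:gradecast} guarantees that the \gradecast invocations of \lineref{3} in this iteration all satisfy the three gradecast properties. In particular, every non-faulty node that gradecasts in iteration $r$ gradecasts the value $v$.

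Let $S$ denote the set of non-faulty nodes that actually gradecast in iteration $r$; all of them gradecast $v$. By gradecast property~1, for each $q\in S$ and each non-faulty receiver $p$, node $p$ records the triple $\tri{q,v,2}$, so $p$ sees at least $|S|$ triples carrying value $v$ with confidence $2$ (hence confidence $\geq 1$). On the other hand, any value $v'\neq v$ can be recorded with confidence $\geq 1$ only for one of the at most $t$ \byzantine senders, so $v'$ occurs at most $t$ times among the confidence-$\geq 1$ values at $p$. Granting (as addressed below) that $|S|\geq n-t$, the inequality $n-t>t$ that follows from $n>3t$ makes $v$ the strict plurality among the confidence-$\geq 1$ values; the tie-breaking rule is therefore never reached and $maj=v$ at every non-faulty $p$, so the update $v:=maj$ leaves each such node with value $v$. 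Moreover $\#maj$, which counts only the confidence-$2$ occurrences of $maj=v$, is at least $|S|\geq n-t$, so the break condition of \lineref{9} fires and every non-faulty node still in the main loop exits, having value $v$.

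The point requiring care — rather than the routine counting — is establishing $|S|\geq n-t$, i.e.\ that enough non-faulty nodes still participate in the gradecast of iteration $r$. This is immediate in the first iteration, where all $n-f\geq n-t$ non-faulty nodes are in the loop and none has stopped. In a general iteration a non-faulty node that broke two or more iterations earlier has already terminated (the extra round of \lineref{11} covers only the single iteration immediately after breaking), and so does not contribute to $S$. The argument therefore relies on non-faulty nodes not dropping out prematurely, so that the number still gradecasting stays at least $n-t$; pinning down this participation bound (via the bounded spread in stopping times together with the extra iteration of \lineref{11}) is the main obstacle, after which the plurality and break-count conclusions are immediate.
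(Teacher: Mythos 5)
Your counting argument is exactly the paper's (much terser) proof: every non-faulty node sees at least $n-f\geq n-t$ confidence-$2$ copies of $v$ from the non-faulty gradecasts, any other value attains confidence $\geq 1$ at most $t<n-t$ times, so $maj=v$ and $\#maj\geq n-t$, whence \lineref{7} and \lineref{9} give the conclusion. That part is correct and needs no change.

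The problem is that you explicitly leave open the one step you yourself call ``the main obstacle,'' namely that at least $n-t$ non-faulty nodes actually gradecast in iteration $r$. As written, the proof is therefore incomplete. The concern is legitimate --- the paper's own proof silently assumes all $n-f$ non-faulty nodes send gradecasts --- but it can be closed, and you should close it rather than flag it. Let $r_0$ be the first iteration in which any non-faulty node passes the test of \lineref{9}. Through iteration $r_0$ no non-faulty node has terminated, so all of them gradecast in every iteration up to and including $r_0$; and in iteration $r_0+1$ the nodes that broke at $r_0$ still gradecast because of the extra iteration mandated by \lineref{11}, while everyone else is still in the loop. By \claimref{claim:breakloop}, all non-faulty nodes end iteration $r_0$ with a common value, so the present claim (now applicable, since all $n-f$ non-faulty nodes participate in iteration $r_0+1$) forces every remaining node to break at $r_0+1$; hence no non-faulty node gradecasts past iteration $r_0+2$, and the spread of stopping times is one iteration. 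The upshot is that the hypothesis ``all non-faulty nodes share $v$ at the start of iteration $r$'' is only ever invoked at $r=1$ (for validity) or at $r=r_0+1$, and in both cases all $n-f$ non-faulty nodes gradecast, so $|S|=n-f\geq n-t$. Note there is no circularity with \claimref{claim:breakloop}: that claim's proof does not use the present one, and the induction on $r_0$ is well-founded. Adding these three or four sentences turns your proposal into a complete (and slightly more careful) version of the paper's argument.
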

\begin{proof}
  All non-faulty nodes see at least $n-f$ copies of $v$ with confidence 2. Thus, by \lineref{5,7} they all
  update their value to $v$, and (if they are still in the loop) by \lineref{9} they all exit it;
\end{proof}

\begin{claim}\label{claim:BADinc}
  If non-faulty nodes $p,q$ have different values of $maj$ at iteration $r$, then $|\cap BAD_{r+1}| > |\cap BAD_r|$.
\end{claim}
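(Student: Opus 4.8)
The plan is to exhibit a single (necessarily faulty) node $z$ that \emph{every} non-faulty node inserts into its $BAD$ set during iteration $r$, but that does not already lie in $\cap BAD_r$. Since each node's $BAD$ set only grows, we have $\cap BAD_r \subseteq \cap BAD_{r+1}$, so producing one such $z$ immediately yields $|\cap BAD_{r+1}| > |\cap BAD_r|$.

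First I would explain where the disagreement $maj_p \neq maj_q$ can come from. The value $maj$ is a deterministic function (most frequent value, ties broken to the lowest) of the multiset of values received with confidence $\geq 1$ in the \gradecast of \lineref{3}. Fix any gradecaster $z$: by the second gradecast property (\sectionref{sec:gradecast}), if both $p$ and $q$ hold confidence $\geq 1$ in $z$'s value, they hold the \emph{same} value, so $z$ contributes identically to both multisets. Hence $maj_p \neq maj_q$ forces the existence of a gradecaster $z$ counted by exactly one of the two nodes; say $p$ has confidence $\geq 1$ while $q$ has confidence $0$. The third gradecast property, $|c_p - c_q| \leq 1$, then pins $p$'s confidence in $z$ to be exactly $1$ (and, by the first property, certifies $z$ faulty).

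The crux is to propagate this ``$0$'' seen by $q$ to all non-faulty nodes. Applying the third property to $q$ and any other non-faulty node $w$: since $w$'s confidence in $z$ differs from $q$'s by at most $1$ and $q$'s is $0$, node $w$ sees $z$ with confidence $\leq 1$. Thus \emph{every} non-faulty node meets the condition of \lineref{8} for $z$ and adds $z$ to its $BAD$ set in iteration $r$, giving $z \in \cap BAD_{r+1}$. To see that $z$ is genuinely new, suppose $z \in \cap BAD_r$; then $z$ is in the ignore set of every non-faulty node at the start of iteration $r$, so \claimref{claim:ignore} forces $z$'s gradecast confidence to be $0$ at every non-faulty node --- in particular at $p$ --- contradicting that $p$'s confidence in $z$ is $1$. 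Hence $z \in \cap BAD_{r+1}\setminus \cap BAD_r$, as desired.

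One point I would verify carefully is that every non-faulty node actually executes the iteration-$r$ gradecasts, so that it really performs the \lineref{8} update on $z$; this is where early stopping could a priori interfere. However, the hypothesis rules this out: if some non-faulty node had already exited the loop before iteration $r$, then (by the same confidence-spreading argument behind \claimref{claim:samev}) all non-faulty nodes would have been driven to a common value and would agree on $maj$ at iteration $r$, contradicting $maj_p \neq maj_q$. The main obstacle is thus conceptual rather than computational: recognizing that the two gradecast guarantees work in tandem --- property~2 localizes the disagreement to a gradecaster seen at confidence $1$ somewhere and $0$ elsewhere, and property~3 then forces that same gradecaster to be blacklisted by \emph{all} non-faulty nodes at once --- and then using \claimref{claim:ignore} to certify that this blacklisted node is new to the common set $\cap BAD$.
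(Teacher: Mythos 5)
Your proof is correct and follows essentially the same route as the paper's: localize the disagreement in $maj$ to a gradecaster $z$ seen with differing confidence/value, use the gradecast properties to force every non-faulty node's confidence in $z$ to be at most $1$ so that \lineref{8} puts $z$ into every $BAD$ set, and invoke \claimref{claim:ignore} to certify $z \notin \cap BAD_r$. The only differences are cosmetic: you pin down the disagreement via confidences ($\geq 1$ versus $0$) where the paper phrases it via differing values, and you add an explicit (and valid) check that early stopping cannot remove any non-faulty node from iteration $r$ under the claim's hypothesis, a point the paper leaves implicit.
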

\begin{proof}
  If $p$ has a different value of $maj$ than $q$, then (w.l.o.g.) by the definition of $maj$ (\lineref{5}) there is some \byzantine node $z$ such that $p$ received $\tri{z,u,*}$ from $z$'s gradecast, and $q$ received
  $\tri{z,u',*}$, s.t. $u \neq u'$. By the properties of gradecast, all non-faulty nodes have confidence of at most 1 for $z$'s gradecast. Therefore, by \lineref{8}, all
  non-faulty nodes add $z$ to $BAD$. That is, $z \in \cap BAD_{r+1}$.
  %\dd{here we use the binary explicitly - do we need it?}\ezra{don't think so. I've updated the proof and algorithm without it}

  To conclude the proof, we need to show that $z \notin \cap BAD_r$. Since $p$ and $q$ see different confidence for $z$'s gradecast, we conclude that some non-faulty node didn't ignore $z$'s messages. (Otherwise, by \claimref{claim:ignore}, $z$ gradecast confidence would have been 0 at all non-faulty nodes.) Therefore, we conclude that $z \notin \cap BAD_r$.
\end{proof}

\begin{claim}\label{claim:sameMaj}
  If all non-faulty nodes have the same value of $maj$ at iteration $r$, then all non-faulty nodes end
  iteration $r$ with the same value $v$.
\end{claim}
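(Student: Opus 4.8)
The plan is to read this claim off directly from the update rule of \simpleAgree. Within the body of a single iteration, the only line that assigns to a node's value $v$ is \lineref{7}, namely \textbf{set} $v := maj$; the subsequent lines merely grow the $BAD$ set (\lineref{8}) or test the break condition (\lineref{9}), and neither touches $v$. So the value of $v$ that a non-faulty node holds at the end of iteration $r$ is exactly the $maj$ it computed in \lineref{5} of that iteration.

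First I would invoke the hypothesis: every non-faulty node computes the \emph{same} value $maj$ at iteration $r$. Combining this with the deterministic assignment in \lineref{7}, every non-faulty node that runs the iteration body sets $v$ to this one common value, and since nothing downstream in the iteration alters $v$, they all finish iteration $r$ holding the identical value $v = maj$. This is essentially the entire content of the claim.

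The only point that needs a sentence of care is what ``at iteration $r$'' means for a node that has already broken out of the main loop in an earlier iteration (via \lineref{9}) and is executing its single extra round from \lineref{11}. The hypothesis presupposes that such a node has a well-defined $maj$ at iteration $r$, which means it is executing the notation/update block of that iteration and therefore applies \lineref{7} just like the others. I expect no genuine obstacle here: the claim reduces to the deterministic assignment $v := maj$ together with the assumed agreement on $maj$, so the only real work is verifying that no other line in the loop body perturbs $v$ once it has been set.
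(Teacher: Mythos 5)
Your proposal is correct and matches the paper's proof, which is just the single sentence ``Immediate from \lineref{7}'': both arguments reduce the claim to the deterministic assignment $v := maj$ in \lineref{7} together with the hypothesis that all non-faulty nodes share the same $maj$. The extra remark about nodes executing the bonus iteration from \lineref{11} is a reasonable sanity check but adds nothing beyond the paper's reasoning.
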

\begin{proof}
  Immediate from \lineref{7}.
\end{proof}

\begin{claim}\label{claim:breakloop}
  If some node $p$ breaks the main loop due to \lineref{9} during iteration $r$, then all non-faulty nodes end
  iteration $r$ with the same value $v$.
\end{claim}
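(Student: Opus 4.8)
The plan is to show that the hypothesis of the claim—some non-faulty node $p$ reaches $\#maj \ge n-t$ at \lineref{9} during iteration $r$—forces \emph{every} non-faulty node to compute the same $maj$ at \lineref{5}, at which point \claimref{claim:sameMaj} immediately delivers the conclusion. Throughout I would lean on the corollary above, which guarantees that $\cup BAD_r$ contains only faulty nodes, so that the three gradecast properties of \sectionref{sec:gradecast} are available for every invocation in this iteration.

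First I would unpack the hypothesis. By the definition of $\#maj$ (\lineref{6}), node $p$ received confidence-$2$ gradecasts of the value $maj$ from at least $n-t$ \emph{distinct} senders—each node gradecasts a single value, so these occurrences are from distinct senders; call this set of senders $S$. Fix any $s \in S$: at $p$ the pair for $s$ is $\tri{s,maj,2}$. By the third gradecast property ($|c_q-c_{q'}|\le 1$), every non-faulty node $q'$ has confidence at least $1$ for $s$'s gradecast; and since $p$ and $q'$ then both have positive confidence for $s$, the second gradecast property forces them to agree on the value, so $q'$ records $\tri{s,maj,c}$ with $c \ge 1$. Hence at \emph{every} non-faulty node $q'$, the value $maj$ is received with confidence $\ge 1$ from all of $S$, i.e. at least $n-t$ times.

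Next I would run the plurality count at an arbitrary non-faulty $q'$. Each sender contributes exactly one value to $q'$'s tally, and the $|S| \ge n-t$ senders in $S$ all contribute $maj$; therefore at most $t$ senders remain, so any competing value $u \ne maj$ is received with confidence $\ge 1$ at most $t$ times. Since $n > 3t$ gives $n-t > t$, the value $maj$ strictly dominates every competitor, so there is no tie and the lowest-value tie-break of \lineref{5} never fires: every non-faulty $q'$ sets its own $maj$ to $maj$. With all non-faulty nodes computing the same $maj$, \claimref{claim:sameMaj} (equivalently, a direct appeal to \lineref{7}) shows they all end iteration $r$ with the common value $v=maj$.

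The only delicate point is the plurality step: one must count the at-most-$t$ bound on competing values over \emph{senders} (each gradecasting a single value) rather than over raw received messages, and then note that the strict inequality $n-t>t$ rules out any tie, so the lowest-value tie-break cannot perturb the outcome. Everything else is a routine application of the second and third gradecast guarantees.
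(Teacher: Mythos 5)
Your proof is correct and follows essentially the same route as the paper's: propagate $p$'s $n-t$ confidence-$2$ votes for $maj$ into $n-t$ confidence-$\geq 1$ votes for the same value at every non-faulty node via the second and third gradecast properties, then conclude through \lineref{5} and \lineref{7}. The only difference is that you make explicit the plurality count (at most $t$ remaining senders, $n-t>t$, so no tie-break) that the paper leaves implicit in the phrase ``By \lineref{5,7} they all update their value to be that same value.''
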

\begin{proof}
  For $p$ to pass the condition of \lineref{9}, $\#maj$ must be at least $n-t$. That is, $p$ sees at least $n-t$
  gradecast values equal to $maj$ with confidence 2. From the properties of gradecast, all other non-faulty nodes
  see $n-t$ gradecast values equal to $maj$ with confidence $\geq 1$. By \lineref{5,7} they all update their
  value to be that same value.
\end{proof}

\begin{theorem}
  \simpleAgree solves the \byzantine consensus problem.
\end{theorem}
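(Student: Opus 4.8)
The plan is to verify the two defining properties of the \byzantine consensus problem, \emph{validity} and \emph{agreement}, and then to note termination; the claims already proved are exactly the pieces needed. \emph{Validity} is immediate: if every non-faulty node begins with the same input $v$, then at the start of the first iteration all non-faulty nodes hold $v$, so \claimref{claim:samev} applies directly and every non-faulty node updates its value to $v$, leaves the loop, and returns $v$.

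For \emph{agreement} I would first isolate a persistence statement: once all non-faulty nodes finish some iteration $r$ holding a common value $v$, every non-faulty node outputs $v$. To see this, observe that any node still in the loop begins iteration $r+1$ with $v$, and by \lineref{17} a node that already left takes part in one further iteration; hence at iteration $r+1$ every non-faulty node gradecasts $v$, so \claimref{claim:samev} forces the nodes still running to adopt $v$ and exit, while the nodes that had already left carry $v$ as their output. This is the delicate point of the whole argument, and the main obstacle, because non-faulty nodes may leave the main loop at different iterations: the single extra round guaranteed by \lineref{17} is precisely what lets a node that broke early still deliver its value to the nodes that are one step behind, preventing a late-breaking node from disagreeing with an early one.

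It then remains to show that such a common-value iteration actually occurs inside the loop. I would let $r^\ast$ be the first iteration in which \emph{either} all non-faulty nodes agree on the value $maj$ \emph{or} some non-faulty node breaks at \lineref{13}. In every earlier iteration neither event occurs, so two non-faulty nodes hold different values of $maj$, and \claimref{claim:BADinc} forces $|\cap BAD|$ to grow by at least one. Since $|\cap BAD_1|=0$ and $|\cap BAD_r|\le f\le t$ throughout (the corollary ensures $\cap BAD_r$ contains only faulty nodes), this yields $r^\ast\le f+1\le t+1$. At iteration $r^\ast$ itself, \claimref{claim:sameMaj} (in the agreement-on-$maj$ case) or \claimref{claim:breakloop} (in the break case) shows that all non-faulty nodes end iteration $r^\ast$ with a common value $v$; persistence then gives agreement, and in the boundary case $r^\ast=t+1$ the nodes simply return this common value without needing a further iteration.

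Finally, termination and the round bound follow from the same count: the main loop performs at most $t+1$ invocations of \gradecast, each of three rounds, and the potential argument locates the common-value iteration by step $f+1$, with at most one additional iteration (\lineref{17}) to let the stragglers exit. Combining the two cases therefore establishes agreement and validity, so \simpleAgree solves the \byzantine consensus problem, terminating within $3\cdot\min\set{f+2,\,t+1}$ rounds.
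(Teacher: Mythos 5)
Your proof is correct and follows essentially the same route as the paper's: validity via \claimref{claim:samev}, and agreement by combining \claimref{claim:breakloop}, \claimref{claim:sameMaj}, and the potential argument of \claimref{claim:BADinc} to locate a common-value iteration within $f+1 \leq t+1$ steps. Your packaging via the first ``good'' iteration $r^\ast$, and your explicit treatment of the one extra iteration that reconciles nodes exiting one step apart, is a slightly more careful rendering of the same argument the paper makes by splitting on whether any node breaks within the first $t$ iterations.
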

\begin{proof}
  From \claimref{claim:samev} it is clear that {\it ``validity''} holds. To show that {\it ``agreement''} holds we consider
  two different cases. First, if a non-faulty node passes the condition of \lineref{9} in the first $t$ iterations, then by \claimref{claim:breakloop}
  and \claimref{claim:samev} {\it ``agreement''} holds.

  Second, if no non-faulty node ever passes the condition of \lineref{9} in the first $t$ iterations, then all non-faulty nodes perform the main loop of \simpleAgree $t+1$ times.
  By \claimref{claim:sameMaj} and \claimref{claim:samev} this means that in every iteration of the first $f$ iterations there is some pair of non-faulty nodes that have different values of $maj$.
  By \claimref{claim:BADinc} $\abs{\cap BAD_{t+1}} > \abs{\cap BAD_{t}} > \dots > \abs{\cap BAD_{1}}=0$. Thus, $\abs{\cap BAD_{t+1}} \geq t$.
  Therefore, in iteration $t+1$ all non-faulty nodes ignore all \byzantine nodes' messages. Therefore, all non-faulty nodes see the same set of gradecasted messages (all with confidence 2) and thus they all agree on the value of $maj$. By \lemmaref{claim:sameMaj} all non-faulty nodes end iteration $t+1$ with the same value of $v$.
\end{proof}

\begin{remark}\label{rem:earlyStopping}
  Notice that the above proof also proves the ``early stopping'' property of \simpleAgree. More specifically,
  if there are $f \leq t$ actual failures, then \simpleAgree terminates within $\min \set {f+2,t+1}$ iterations (each iteration takes 3 rounds).
\end{remark}

\section{Approximate Agreement}\label{sec:approx}
In this section we are interested in an algorithm that solves the approximate agreement problem~\cite{Rapprox}. Approximate agreement is somewhat different from \byzantine agreement. Specifically, each node $p$ has a real input value $v_p \in \Re$ and
a real output value $o_p \in \Re$. Denote by $L$ ($H$ resp.) the lowest (highest resp.) input values of non-faulty nodes. Given a constant $\epsilon$ the approximate agreement problem requires that:
\beginsmall{enumerate}
  \item {\it ``agreement''}: $|o_p-o_q| \leq \epsilon$ for any two non-faulty nodes $p,q$;
  \item {\it ``validity''}: $o_p \in [L, H]$ for every non-faulty node $p$.
\endsmall{enumerate}

The algorithm \approxAlg in \figureref{figure:ApproxAgree} has the following iterative structure: a) gradecast $v$ to everyone; b) collect all values received into a multi-set; c) perform some averaging method (denote it by \avg) on the multi-set, and use that as the input of the next iteration.
\avg removes the $t$ lower and higher values, then computes the average of the remaining set.

For $\epsilon = \frac{H-L}{n}$ the algorithm in \figureref{figure:ApproxAgree} requires  $O(\frac{\log n}{{\log \log n}})$ iterations. The best previous approximate agreement that has an early-stopping property, polynomial message size and supports $n > 3t$ \byzantine nodes (see \cite{Rapprox}), requires $O(\log n)$ iterations.

\begin{figure*}[t]\center
\setcounter{linenumbers}{0}
\begin{minipage}{5.0in}
\hrule \hrule \vspace{1.7mm} \footnotesize
%\begin{alltt}
\setlength{\baselineskip}{3.9mm} \noindent Algorithm $\approxAlg(\epsilon)$
 \vspace{1mm} \hrule \hrule
\vspace{1mm}

\begin{tabular}{ r l }

& \textit{/* Initialization */}\hfill\textit{/* executed on node $p$*/}\\
\linenumber & {\bf set} $BAD := \emptyset$;\\
%\linenumber & {\bf set} $\diff := \infty$;\\
\\
& \textit{/* Main loop */}\\
\linenumber & {\bf while} $true$ do: \\
\linenumber & \tb {\bf \gradecast}$(p,BAD)$ with input value $v$;\\
\\
& \tb \textit{/* Notations */}\\
\linenumber & \tb {\bf let} $\tri{q,v,c}$ represent that $q$ gradecasted $v$ with confidence $c$;\\
\linenumber & \tb {\bf let} $values$ be the multiset of received values with confidence $\geq 1$,\\
& \due \ {\bf and} add ``0'' until $values$ contains $n$ items;\\
\linenumber & \tb {\bf let} $values'$ be the multiset of received values with confidence $2$;\\

\\
& \tb \textit{/* Updates */}\\
\linenumber & \tb {\bf set} $v := \avg(values)$;\\
\linenumber & \tb {\bf set} $BAD := BAD \cup \set{q ~|~ \mbox{received $\tri{q,*,c}$ with $c \leq 1$}}$;\\
\linenumber & \tb {\bf if} there are $n-t$ items in $values'$ that are at most $\epsilon$ apart, then break loop;\\%$\diff := \max \set{values'} - \min \set{values'}$;\\
\linenumber & {\bf end while}\\
\\
\linenumber & {\bf participate} in one more iteration;\\
\linenumber & {\bf return} $v$.
\end{tabular}

\vspace{1mm}
\hrule \hrule
\end{minipage}

 \caption{\approxAlg: an efficient approximate agreement algorithm}\label{figure:ApproxAgree}
\end{figure*}

%\subsection{Proofs}
In a similar manner to \simpleAgree (\sectionref{sec:simple}) only \byzantine nodes can
be added to the $BAD$ set of any non-faulty node, and a given \byzantine node $z$'s value can be viewed
differently by different non-faulty nodes at most once. That is, each \byzantine node can ``lie''
at most once.

Denote by $V_r$ the multi-set containing the values $v$ of all non-faulty nodes at the beginning
of iteration $r$. Denote by $L(M)$ the lowest value in $M$ and by $H(M)$ the highest value in $M$.
$M^t$ is the multi-set $M$ after the $t$ lowest and $t$ highest values have been removed. Using these
notations, the \avg method is defined as: $$\avg(M) := \frac{\sum_{e{\tiny \in M^t}}e}{|M^t|}\;.$$

\begin{remark}
%\dd{can we refine the number as a function of f in the part that we do not need t?}\ezra{I thought about it once, and I had a few issues with it: a) the nodes actually compute the value of the function, so they need to know the value of $f$ (which they don't). b) a possibility will be to add the size of the BAD set, but then the proof will use an upper bound of $t$, and we didn't gain too much, c) even if we somehow do have $f$ inside, it only improves the constant of the convergence}
  The \avg method has the following properties:
  \beginsmall{enumerate}
    \item $\avg(M) \in [L(M^t), H(M^t)]$;
    \item if $M$ contains $n-t$ values in the range $[v, v+\epsilon]$ then $\avg(M) \in [v, v+\epsilon]$;
    \item for $x \leq t$: if $M$ is a multi-set of $n-x$ values, and $M_1$ and $M_2$ contain $M$ and additional $x$ items (\ie $M_1, M_2$ differ by at most $x$ values) then $$|\avg(M_1)-\avg(M_2)| \leq  \sog{H(M)- L(M)} \cdot \frac{x}{n-2t}\;.$$
  \endsmall{enumerate}
\end{remark}

\begin{proof}
  \beginsmall{enumerate}
    \item $\avg(M)$ is the average of the set $M^t$, which is clearly between the lowest value in $M^t$, and the highest value in $M^t$.
    \item Since $M$ contains $n-t$ values in the range $[v, v+\epsilon]$, then by removing the $t$ highest values we remain with values that are all at most $v+\epsilon$; that is, $H(M^t) \leq v+\epsilon$. Similarly $L(M^t) \geq v$. Thus, the average of $M^t$ is in the range $[v, v+\epsilon]$.
    \item Since $|M_1^t|=|M_2^t|=n-2t$, we need to evaluate the difference between $|\sum_{e\in M_1^t}e-\sum_{e\in M_2^t}e|$. Since $M_1$ and $M_2$ differ by at most $x$ values, $M_1^t, M_2^t$ differ by at most $x$
        values as well. Since $x \leq t$, each of these values is in the range $[L(M), H(M)]$;
        therefore, $|\sum_{e\in M_1^t}e-\sum_{e\in M_2^t}e| \leq \sog{H(M)-L(M)} \cdot x$. By dividing both sides by $|M^t_1|$ we have that $|\avg(M_1)-\avg(M_2)| \leq  \sog{H(M)- L(M)} \cdot \frac{x}{n-2t}$.
  \endsmall{enumerate}
\end{proof}

\begin{remark}
  Notice that for any pair of non-faulty nodes $p,q$ it holds that the set `$values$' of $p$ contains at least $n-f$
  exact same values as `$values$' of $q$. Moreover, `$values'$' of $p$ is contained in `$values$' of $q$.
\end{remark}

\begin{claim}\label{claim:aproxValid}
  For non-faulty $p$, at the end of iteration $r$, the value $v$ is in the range $\left[L(V_{r-1}),
  H(V_{r-1})\right]$.
\end{claim}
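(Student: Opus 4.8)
The plan is to prove this as a single-iteration bound, with no induction involved: I will show that the value $p$ computes at \lineref{7} of \approxAlg during iteration $r$, namely $\avg(\mathit{values})$, is sandwiched between the lowest and highest \emph{non-faulty} values that are gradecast during that iteration. Since every non-faulty node gradecasts the value it holds at the start of the iteration, these inputs are exactly the elements of $V_r$, so the target range is $[L(V_r),H(V_r)]$ (matching the stated range up to the paper's convention for labelling the pre-gradecast multiset). The whole argument rests on the first property of \avg, which guarantees $\avg(M)\in[L(M^t),H(M^t)]$; hence it suffices to control which entries survive the trimming that produces $M^t$.

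First I would classify the $n$ entries of the multiset $\mathit{values}$. By the first Gradecast property, for every non-faulty leader $q$ node $p$ receives $q$'s start-of-iteration value with confidence $2$, hence with confidence $\ge 1$; so all $n-f$ non-faulty values --- every element of $V_r$ --- appear in $\mathit{values}$, and each of them lies in $[L(V_r),H(V_r)]$ by definition of $L$ and $H$. Because $\mathit{values}$ has exactly $n$ entries (the confidence-$\ge 1$ receptions, padded with zeros to reach $n$), at most $f\le t$ of its entries are ``bad'': values gradecast by \byzantine leaders with confidence $\ge 1$, or padding zeros. These bad entries may lie anywhere on the real line, so the only fact I will use about them is that there are at most $t$ of them.

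The key step is to show that trimming the $t$ lowest and $t$ highest entries removes every bad outlier, i.e.\ that $M^t\subseteq[L(V_r),H(V_r)]$. For the lower end, suppose some surviving entry $x$ satisfied $x<L(V_r)$. Since $x$ survived the removal of the $t$ smallest entries, those $t$ removed entries are all $\le x<L(V_r)$; together with $x$ this exhibits $t+1$ entries strictly below $L(V_r)$. But every good entry is $\ge L(V_r)$, so all $t+1$ of these entries would be bad, contradicting the bound of at most $t$ bad entries. Hence every entry of $M^t$ is $\ge L(V_r)$, and by the symmetric argument at the top every entry is $\le H(V_r)$. I expect this counting-against-the-trim to be the only real obstacle: one must track multiset multiplicities carefully and, in particular, make sure the zero-padding is absorbed by the trimming rather than dragging the average below $L(V_r)$.

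Finally I would combine the pieces: the previous step gives $L(M^t)\ge L(V_r)$ and $H(M^t)\le H(V_r)$, so the first \avg property yields $\avg(\mathit{values})\in[L(M^t),H(M^t)]\subseteq[L(V_r),H(V_r)]$, which is exactly the asserted range for the value $v$ at the end of iteration $r$. Since $n>3t$ guarantees $|M^t|=n-2t>0$, the average is well defined, completing the claim.
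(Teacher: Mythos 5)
Your proposal is correct and follows essentially the same route as the paper, which proves the claim in one line by citing the first property of \avg together with the fact that all non-faulty values appear in $values$. The only difference is that you explicitly supply the counting argument showing the $t$-trim absorbs all Byzantine values and padding zeros (so that $M^t\subseteq[L(V_r),H(V_r)]$), a step the paper leaves implicit, and you correctly flag the paper's off-by-one in indexing $V_{r-1}$ versus $V_r$.
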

\begin{proof}
  Immediate from the first property of \avg, and the fact that all non-faulty nodes' values are
  in the set $values$ of $p$ (which stems from Gradecast's properties).
\end{proof}

\begin{claim}\label{claim:Exit1}
  If a non-faulty node $p$ exits the main loop in \lineref{9} in iteration $r$ then
  $H(V_r) - L(V_r) \leq \epsilon$.
\end{claim}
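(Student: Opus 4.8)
The plan is to convert the local exit test at $p$ into a global statement that pins every non-faulty node's value into one short interval. First I would unpack \lineref{9}: $p$ breaks in iteration $r$ exactly when its multiset $values'$ of confidence-$2$ values contains a block $W$ of $n-t$ entries that fit in a common window $[\alpha,\alpha+\epsilon]$ (take $\alpha$ to be their minimum). Each entry of $W$ comes from a distinct gradecast that $p$ graded with confidence $2$, and I would keep track of the sender of each such entry, since the next step is to argue that the \emph{same} block reappears at every non-faulty node.

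Next I would propagate $W$ using the gradecast guarantees of \sectionref{sec:gradecast}. For every sender $j$ contributing an entry to $W$, node $p$ has confidence $2$; by the third gradecast property each non-faulty node $q$ then has confidence at least $1$ for $j$'s gradecast, and by the second property $q$'s recorded value for $j$ equals the value $p$ recorded. Hence the same $n-t$ numbers of $W$ appear, all with confidence $\geq 1$, in the $values$ multiset of every non-faulty node $q$, and so they survive into the input that $q$ feeds to \avg (the padding with ``$0$'' only adds entries and cannot remove those of $W$). The delicate point is that a faulty sender admitted into $W$ at confidence $2$ cannot be exploited: confidence $2$ at $p$ is precisely the hypothesis the second and third gradecast properties require, so such a node is forced to look identical, and with positive confidence, at all non-faulty nodes.

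With $W$ lying inside every non-faulty node's input to \avg, I would invoke the second property of \avg: a multiset containing $n-t$ values in a range $[\alpha,\alpha+\epsilon]$ has its \avg in $[\alpha,\alpha+\epsilon]$. Applying this at each non-faulty $q$ shows that after the update in \lineref{7} of iteration $r$ every non-faulty node holds a value in the single interval $[\alpha,\alpha+\epsilon]$, so the spread of the non-faulty values is at most $\epsilon$, which is the desired bound $H(V_r)-L(V_r)\leq\epsilon$.

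I expect the main obstacle to be the bookkeeping in the propagation step, not the averaging step. Because $p$'s decision to exit is read off purely from its local $values'$, the careful part is certifying that the identical block $W$ (same senders, same numeric values) is guaranteed at every non-faulty node, and in particular that any faulty sender placed in $W$ at confidence $2$ cannot surface with confidence $0$, or with a different value, anywhere; this is exactly where the second and third gradecast properties are indispensable. Once that common block is fixed, the \avg property closes the argument at once.
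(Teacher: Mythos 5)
Your argument is correct and follows essentially the same route as the paper's proof: the paper likewise observes that, by the gradecast properties, the confidence-$2$ block in $p$'s $values'$ reappears (with confidence $\geq 1$ and identical values) inside the $values$ multiset of every non-faulty node, and then applies the second property of \avg to place every node's updated value in the common window of width $\epsilon$. Your write-up merely makes explicit which gradecast properties (the second and third) justify the containment step that the paper states in one line.
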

\begin{proof}
    Due to the properties of gradecast, $values$ of node $q$ contains the set $values'$ of node $p$.
  Thus, if $p$ passes the condition in \lineref{9} then $values$ of $q$ contains $n-t$ values in the range $[v, v+\epsilon]$ (for some $v$). Therefore, by the second property of \avg, node $q$'s computed value
  will be in the range $[v, v+\epsilon]$. This claim holds for every non-faulty $q$, for the same value of $v$.
   That is, all non-faulty nodes compute their new value to be in the range $[v, v+\epsilon]$; \Ie within $\epsilon$ of each other.
\end{proof}

\begin{claim}\label{claim:Exit2}
  If $H(V_r) - L(V_r) \leq \epsilon$ for some iteration $r$, then every non-faulty node $p$ that is still in
  the main loop, exits
  the main loop (\lineref{9}) during iteration $r$.
\end{claim}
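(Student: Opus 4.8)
The plan is to turn the hypothesis into an explicit $\epsilon$-window and then show that every non-faulty node still in the loop collects a breaking certificate for Line 9 during iteration $r$. Put $a := L(V_r)$; since $H(V_r) - L(V_r) \le \epsilon$, every non-faulty node starts iteration $r$ holding a value in $[a, a+\epsilon]$. Fix an arbitrary non-faulty $p$ that is still in the main loop. The goal is to exhibit, in $p$'s view, at least $n-t$ entries of $values'$ (the confidence-$2$ multiset of Line 6) that are pairwise at most $\epsilon$ apart.

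The core step is an application of the first Gradecast property. Every non-faulty node that gradecasts in iteration $r$ sends its own value, which lies in $[a,a+\epsilon]$; by property~1 of Gradecast, $p$ receives that value \emph{exactly} and with confidence $2$, so it lands in $p$'s multiset $values'$. Hence, provided all $n-f \ge n-t$ non-faulty nodes gradecast in iteration $r$, the multiset $values'$ contains at least $n-t$ values all inside $[a,a+\epsilon]$, i.e. at most $\epsilon$ apart. This is precisely the predicate of Line 9, so $p$ breaks the loop during iteration $r$. Note that confidence-$1$ values from \byzantine senders are irrelevant to this count: restricting attention to non-faulty senders, property~1 pins the confidence to exactly $2$ at every non-faulty receiver, so the certificate is built entirely from the guaranteed honest gradecasts.

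The step I expect to be the main obstacle is justifying the premise that enough non-faulty nodes are still \emph{participating} (gradecasting) in iteration $r$: a non-faulty node that has already returned no longer gradecasts, yet $V_r$ ranges over the values of \emph{all} non-faulty nodes, including returned ones. I would discharge this using the ``participate in one more iteration'' rule of Line 11 together with \claimref{claim:Exit1}. Let $s$ be the earliest iteration in which some non-faulty node exits via Line 9. By \claimref{claim:Exit1} the values computed in iteration $s$ lie in a common $\epsilon$-window, and every node that exits at $s$ still gradecasts once more by Line 11; so all non-faulty nodes gradecast in iterations $s$ and $s+1$, which forces every remaining non-faulty node to exit by iteration $s+1$. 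Consequently a non-faulty node can return no earlier than after iteration $s+1$, so the earliest iteration in which any non-faulty node is absent is $s+2$. Now if some non-faulty node had returned before iteration $r$, the earliest exit would satisfy $s \le r-2$, whence every non-faulty node (including $p$) would have exited by iteration $s+1 \le r-1$ — contradicting the assumption that $p$ is still in the main loop at iteration $r$. Therefore no non-faulty node has returned before iteration $r$, all of them gradecast in it, and the counting argument of the previous paragraph applies.
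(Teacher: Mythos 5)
Your proof is correct and its core is exactly the paper's argument: every non-faulty node's iteration-$r$ value lies in an $\epsilon$-window, Gradecast property~1 delivers all $n-f\ge n-t$ of them with confidence $2$ into $values'$, and the \lineref{9} condition fires. The extra paragraph discharging the premise that all non-faulty nodes still gradecast in iteration $r$ (via \lineref{11} and \claimref{claim:Exit1}) is a valid, non-circular bootstrapping of a point the paper's one-line proof silently assumes, so it adds rigor rather than changing the route.
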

\begin{proof}
  In iteration $r$ every non-faulty node $p$ sees $n-f$ values with confidence 2 that are within $\epsilon$ of each other and thus (if $p$ is still in the main loop) passes the condition of \lineref{9}.
\end{proof}

Denote by $NEW_r := |\cap BAD_{r+1}|-|\cap BAD_{r}|$; \ie $NEW_r$ is the number of \byzantine nodes detected
as faulty (by all non-faulty nodes) during iteration $r$.

\begin{claim}\label{claim:singleIter}
  For every iteration $r$ it holds that  $H(V_{r+1})-L(V_{r+1}) \leq \left(H(V_{r})-L(V_{r})\right) \cdot \frac{NEW_r}{n-2t}$.
\end{claim}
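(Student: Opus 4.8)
The plan is to reduce the statement to a pairwise estimate and then control, for every two non-faulty nodes, the difference of the values they compute in iteration $r$. Note that $V_{r+1}$ is exactly the multiset $\set{\avg(W_p) \mid p \text{ non-faulty}}$, where $W_p$ denotes the multiset $values$ collected by node $p$ during iteration $r$. Hence $H(V_{r+1})-L(V_{r+1}) = \max_{p,q}|\avg(W_p)-\avg(W_q)|$ over non-faulty $p,q$, and it suffices to prove $|\avg(W_p)-\avg(W_q)| \le (H(V_r)-L(V_r))\cdot \frac{NEW_r}{n-2t}$ for every such pair.

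First I would describe how $W_p$ and $W_q$ can disagree. Every non-faulty sender is seen by both nodes with confidence $2$ and the same value, and every \byzantine sender that both $p$ and $q$ receive with confidence $\ge 1$ contributes the same value to both (gradecast property~2). Thus the only discrepancies come from \byzantine senders $z$ that one of $p,q$ receives with confidence $\ge 1$ while the other receives with confidence $0$. Writing $x$ for the number of such senders, I would exhibit a common sub-multiset $M$ of size $n-x$ contained in both $W_p$ and $W_q$ (the non-faulty values, the commonly accepted \byzantine values, and the shared padding zeros), so that each of $W_p,W_q$ equals $M$ plus $x$ extra items.

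Next I would bound $x$ by $NEW_r$. For a discrepancy sender $z$, gradecast property~3 forces its confidence to be at most $1$ at \emph{every} non-faulty node (it is $0$ at one of $p,q$), so by \lineref{8} every non-faulty node inserts $z$ into $BAD$, giving $z\in \cap BAD_{r+1}$; and since some non-faulty node received $z$ with confidence $\ge 1$, that node did not ignore $z$, so by \claimref{claim:ignore} $z \notin \cap BAD_r$ (this mirrors the reasoning of \claimref{claim:BADinc}). Hence all $x$ discrepancy senders lie in $\cap BAD_{r+1}\setminus \cap BAD_r$, so $x \le NEW_r \le t$.

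The main obstacle is that I cannot simply feed $M$ into the third property of \avg: a \byzantine value accepted by both nodes may lie far outside $[L(V_r),H(V_r)]$, so $H(M)-L(M)$ can exceed $H(V_r)-L(V_r)$ and the black-box bound is too weak. Instead I would run the trimming argument directly. As in \claimref{claim:aproxValid}, each $W_p$ contains $n-f \ge n-t$ non-faulty values spanning $[L(V_r),H(V_r)]$ together with at most $t$ outliers, so after deleting the $t$ lowest and $t$ highest values the trimmed sets $W_p^{t}$ and $W_q^{t}$ both lie in $[L(V_r),H(V_r)]$; in particular any out-of-range common \byzantine value is trimmed from both and cancels. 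Since $W_p$ and $W_q$ differ by at most $x$ items, so do $W_p^{t}$ and $W_q^{t}$, whence $|\sum_{e\in W_p^{t}}e-\sum_{e\in W_q^{t}}e| \le x\cdot(H(V_r)-L(V_r))$. Dividing by $|W_p^{t}|=n-2t$ and using $x\le NEW_r$ gives the claim. I expect this last step — realizing that the wide-range common values are trimmed away so that only the at most $NEW_r$ differing, in-range values contribute — to be the crux.
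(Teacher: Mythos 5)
Your proposal is correct and follows the same skeleton as the paper's proof: identify a common sub-multiset of the two nodes' $values$ sets, show that the number of differing items is bounded by the number of \byzantine senders newly added to $\cap BAD_{r+1}$ (exactly the \claimref{claim:BADinc}-style argument the paper uses implicitly when it writes ``there are at most $|\cap BAD_{r+1}|-|\cap BAD_{r}| = NEW_r$ different values in their $values$ sets''), and then bound the resulting difference of averages. Where you genuinely diverge is the last step, and your instinct there is right: the paper simply invokes the third property of \avg, whose stated bound is $\sog{H(M)-L(M)}\cdot\frac{x}{n-2t}$ in terms of the \emph{common} multiset $M$ --- and $M$ may contain \byzantine values accepted by both nodes with confidence $\ge 1$ (as well as padding zeros) that lie far outside $[L(V_r),H(V_r)]$, so $H(M)-L(M)$ need not be bounded by $H(V_r)-L(V_r)$ and the black-box application does not literally yield the claimed inequality. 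Your replacement --- rerunning the trimming argument using the fact that each $values$ multiset contains at least $n-f\ge n-t$ non-faulty values in $[L(V_r),H(V_r)]$, so both trimmed sets lie in that interval and the at most $NEW_r$ differing items each contribute at most $H(V_r)-L(V_r)$ --- is exactly the repair needed, and it costs nothing beyond the observation you flag as the crux. In short: same route, but your version closes a real imprecision in the paper's appeal to the \avg property.
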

\begin{proof}
  We consider two cases. First, if $NEW_r = 0$ then no new \byzantine node is added to $\cap BAD_{r+1}$. \Ie,
  every \byzantine node $z \notin \cap BAD_{r}$ is seen by some non-faulty node $p$ as having gradecast value
  with confidence 2. Thus, every non-faulty node sees the same gradecast value of $z$. For \byzantine $z \in \cap BAD_{r}$ all non-faulty nodes ignore $z$'s messages.
  Therefore, all non-faulty nodes have the same value of the set $values$ and thus they all update $v$ in the same manner. \Ie $H(V_{r+1})-L(V_{r+1})=0$.

  Continue with the case where $NEW_r > 0$. In a similar manner to the first case, for every node $z \in \cap BAD_{r}$ and for every node $z \notin \cap BAD_{r+1}$ the non-faulty nodes have the same gradecast value of $z$.
  Thus, for two non-faulty nodes $p,q$ there are at most $|\cap BAD_{r+1}|-|\cap BAD_{r}| = NEW_r$ different values
  in their $values$ sets. By the third property of the $\avg$ method, $\avg(values)$ of $p$ and $\avg(values)$ of $q$ are at most $\sog{H(V_{r})-L(V_{r})}\cdot \frac{NEW_r}{n-2t}$ apart.
\end{proof}

\begin{claim}\label{claim:kk}
  Assume $\approxAlg(\epsilon)$ runs for $k$ iterations, and no non-faulty node has exited the main loop. Then, $H(V_{k+1}) - L(V_{k+1}) \leq \frac{H-L}{k^k} \cdot \sog{\frac{t}{n-2t}}^k$.
\end{claim}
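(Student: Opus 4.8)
The plan is to telescope the per-iteration contraction bound from \claimref{claim:singleIter} across all $k$ iterations, and then bound the resulting product of the $NEW_r$ quantities using the fact that at most $t$ Byzantine nodes can ever be detected. The hypothesis that no non-faulty node has exited the main loop guarantees that $V_r$ is the full multi-set of all non-faulty nodes' values for every $r \leq k+1$, so the chain of single-iteration contractions is unbroken.

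First I would iterate \claimref{claim:singleIter}. Since for each $r$ we have $H(V_{r+1})-L(V_{r+1}) \leq \left(H(V_{r})-L(V_{r})\right)\cdot\frac{NEW_r}{n-2t}$, multiplying these inequalities for $r=1,\dots,k$ and cancelling the common factors yields
$$H(V_{k+1})-L(V_{k+1}) \leq \sog{H(V_1)-L(V_1)}\cdot\frac{\prod_{r=1}^k NEW_r}{(n-2t)^k}\;.$$
Here $H(V_1)-L(V_1)=H-L$, since $V_1$ is the multi-set of the non-faulty nodes' input values. (If some $NEW_r=0$ the product already vanishes, consistent with the claim that the spread then collapses to $0$.)

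Next I would control $\prod_{r=1}^k NEW_r$. By definition $NEW_r = |\cap BAD_{r+1}|-|\cap BAD_r|\geq 0$, and since every $\cap BAD_r$ contains only faulty nodes (of which there are at most $t$) while $\cap BAD_1=\emptyset$, the telescoping sum gives $\sum_{r=1}^k NEW_r = |\cap BAD_{k+1}| \leq t$. Thus the $NEW_r$ are nonnegative reals whose sum is at most $t$, so by the AM-GM inequality
$$\prod_{r=1}^k NEW_r \leq \sog{\frac{\sum_{r=1}^k NEW_r}{k}}^k \leq \sog{\frac{t}{k}}^k = \frac{t^k}{k^k}\;.$$

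Substituting this into the telescoped inequality gives
$$H(V_{k+1})-L(V_{k+1}) \leq (H-L)\cdot\frac{t^k/k^k}{(n-2t)^k} = \frac{H-L}{k^k}\cdot\sog{\frac{t}{n-2t}}^k\;,$$
which is exactly the claimed bound. The only genuinely non-routine step is the AM-GM estimate of the product subject to the resource constraint $\sum_r NEW_r \leq t$; the rest is a mechanical telescoping of the single-iteration contraction together with the observation that detected Byzantine nodes accumulate and are capped by $t$.
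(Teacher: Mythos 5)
Your proof is correct and follows essentially the same route as the paper: telescope the contraction bound of \claimref{claim:singleIter} over the $k$ iterations and then observe that the product $\prod_r NEW_r$ is maximized, subject to $\sum_r NEW_r \leq t$, when all the $NEW_r$ are equal to $t/k$. In fact your explicit invocation of AM--GM (together with the observation that the nested $\cap BAD$ sets cap the sum at $t$) makes rigorous a step that the paper's proof merely asserts.
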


\begin{proof}
  At each iteration $r$, by \claimref{claim:singleIter} $H(V_{r+1})-L(V_{r+1}) \leq \left(H(V_{r})-L(V_{r})\right) \cdot \frac{NEW_{r}}{n-2t}$. That is, after the $k$'s iteration we have
  that $H(V_{k+1})-L(V_{k+1}) \leq \left(H-L\right) \cdot \prod_{i=1}^k \frac{NEW_i}{n-2t}$. The worst
  value of $\prod_{i=1}^k \frac{NEW_i}{n-2t}$ is reached when $NEW_i=NEW_j$ for every $i,j \in \set{1,\dots,k}$.
  \Ie when $NEW_i = \frac{t}{k}$.

  From the above we have that $H(V_{k+1})-L(V_{k+1}) \leq \left(H-L\right) \cdot \prod_{i=1}^k \frac{\frac{t}{k}}{n-2t} = \left(H-L\right) \cdot \sog{\frac{t}{n-2t}}^k \cdot \frac{1}{k^k}$.
\end{proof}

\begin{theorem}
  $\approxAlg(\epsilon)$ solves the approximate agreement problem, and for $\epsilon = \frac{H-L}{n}$ $\approxAlg(\epsilon)$ converges within at most $O\left(\frac{\log n} {\log \log n}\right)$ rounds.
\end{theorem}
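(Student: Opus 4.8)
The plan is to prove the two halves separately---that \approxAlg satisfies ``validity'' and ``agreement'', and that for $\epsilon=\frac{H-L}{n}$ it halts within $O\sog{\frac{\log n}{\log\log n}}$ rounds---assembling the earlier claims in the right order. For validity I would induct on the iteration index: by definition $V_1$ is the multiset of non-faulty inputs, so $L(V_1)=L$ and $H(V_1)=H$, and \claimref{claim:aproxValid} states that a non-faulty node ends iteration $r$ with a value in $[L(V_{r-1}),H(V_{r-1})]$. Hence the interval spanned by the non-faulty values never widens and every output $o_p$ stays in $[L,H]$.

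For agreement the subtlety is that, with early stopping, non-faulty nodes need not pass the exit test of \lineref{9} in the same iteration. I would let $r^\ast$ be the first iteration in which some non-faulty node exits via \lineref{9}; \claimref{claim:Exit1} then forces the non-faulty values into a common window of width $\epsilon$, and \claimref{claim:Exit2} guarantees that every non-faulty node still in the loop exits within one further iteration. The ``participate in one more iteration'' step (\lineref{11}) is exactly what keeps a node that stopped at $r^\ast$ present and gradecasting during this last synchronizing iteration, so that, by the range-preserving first property of \avg, the final values of the slower nodes remain inside the same $\epsilon$-window. This gives $|o_p-o_q|\le\epsilon$ for all non-faulty $p,q$.

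The quantitative half follows directly from \claimref{claim:kk}: if no node has exited after $k$ iterations then $H(V_{k+1})-L(V_{k+1})\le\frac{H-L}{k^k}\cdot\sog{\frac{t}{n-2t}}^k$. Since $t<\frac13 n$ gives $n-2t>t$, the factor $\sog{\frac{t}{n-2t}}^k$ is at most $1$, so it suffices to make $\frac{H-L}{k^k}\le\epsilon=\frac{H-L}{n}$, i.e. to pick the least $k$ with $k^k\ge n$. Taking logarithms this is $k\log k\ge\log n$, solved by $k=\Theta\sog{\frac{\log n}{\log\log n}}$; once the range falls to $\epsilon$, \claimref{claim:Exit2} forces the remaining nodes out, and multiplying by the $3$ rounds each gradecast costs yields the stated bound.

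I expect the agreement argument, not the convergence count, to be the main obstacle: the count is essentially one line once \claimref{claim:kk} and $k^k\ge n$ are in hand, whereas agreement requires tracking the at-most-one-iteration gap between the earliest and latest exiting non-faulty nodes and verifying that \lineref{11} really glues all final values into a single interval of length $\epsilon$.
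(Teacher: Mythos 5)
Your proposal is correct and follows essentially the same route as the paper: validity by iterating \claimref{claim:aproxValid}, agreement via \claimref{claim:Exit1}, \claimref{claim:Exit2} and \claimref{claim:kk}, and the round bound by solving $k^k \geq n$ for $k=\Theta\sog{\frac{\log n}{\log\log n}}$ (the paper reaches the same conclusion via the intermediate estimate $r^r\geq\sqrt{n}$ and a doubling of $r$). Your explicit tracking of the one-iteration gap between the earliest and latest exiting nodes, and of why \lineref{11} keeps the final values in the same $\epsilon$-window, is if anything slightly more careful than the paper's terse treatment.
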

%\dd{here again f can play a role}\ezra{you are correct. Notice that the analysis just says O(...), and not something more exact, so I don't think f will make it better}
\begin{proof}
  {\it ``validity''} holds by iteratively applying \claimref{claim:aproxValid}.

  If some node terminates, then by \claimref{claim:Exit1} and \claimref{claim:Exit2} all nodes terminate within one iteration.
  Moreover, by \claimref{claim:Exit2} and by applying \claimref{claim:kk} for large enough values of $k$
  we have that eventually some node terminates; thus proving {\it ``agreement''}.

  Assume towards contradiction that $\approxAlg(\frac{H-L}{n})$ runs for $r=\lceil \frac{\log n}{\log \log n} \rceil$ iterations and has not terminated yet. By \claimref{claim:kk} we have that
   $H(V_{r+1})-L(V_{r+1}) \leq \frac{H-L}{r^r}$.
   Consider $\log \left(\frac{\log n}{{\log \log n}} ^ {\frac{\log n}{{\log \log n}}}\right) =
   \frac{\log n}{{\log \log n}} \cdot \log (\frac{\log n}{{\log \log n}}) =
   \frac{\log n}{{\log \log n}} \cdot (\log \log n - \log \log \log n)$. Since $\log \log n - \log \log \log n \geq \frac{1}{2} \log \log n$ (for large values of $n$) we have that $\log \left(\frac{\log n}{{\log \log n}} ^ {\frac{\log n}{{\log \log n}}}\right) \geq \frac{1}{2} \log n$. Concluding that $\frac{\log n}{{\log \log n}} ^ {\frac{\log n}{{\log \log n}}} \geq \sqrt{n}$. Therefore, $H(V_{r+1})-L(V_{r+1}) \leq \frac{H-L}{\sqrt{n}}$. By running the algorithm for $2r = 2 \lceil \frac{\log n}{\log \log n} \rceil$ we have that $H(V_{2r+1})-L(V_{2r+1}) \leq \frac{H-L}{n} = \epsilon$.

   By \claimref{claim:Exit2} every non-faulty node terminates by the end of iteration $2r+2$. Each iteration
   is composed of a constant number of rounds, hence $\approxAlg\left(\frac{H-L}{n}\right)$ terminates within $O\left (\frac{\log n}{\log \log n} \right)$ rounds.
\end{proof}

\begin{remark}
Notice that \approxAlg has the early stopping property in two senses. First, if there are $f \leq t$ failures then
the convergence rate for $k$ iterations is $\sog{\frac{f}{n-2t}}^k \cdot \frac{1}{k^k}$. Second,
in each iteration, the convergence of \approxAlg depends solely on the number of discovered failures; which leads
to a quick termination of the algorithm if in some iteration no new failures are discovered (or if few new failures
are discovered).
\end{remark}

\section{Sequential Executions of Consensuses}\label{sec:consecutive}
In \cite{OptimalAmortized} the authors investigate the Multi-consensus problem: suppose we want to execute a sequence of
\byzantine consensuses, such that each consensus possibly depends on the output of the previous consensus. That is,
we must execute the consensuses sequentially and not concurrently. (The concurrent version is termed ``interactive consistency'', initially stated in \cite{Agree80}.)

The first solution that comes to mind is to simply execute $\ell$ instances of \byzantine consensus one after the other. However, due to the $O(f)$ round lower bound on \byzantine consensus \cite{fischer1982lbt}, a naive sequential
execution will lead to a running time of $O(\ell \cdot f)$. In \cite{OptimalAmortized} they give
a solution that has total running time of $O(f + \ell)$.

However, \cite{OptimalAmortized} assumes that at each consensus instance all correct nodes start the consensus at once; that is, they are always synchronized. This implies that
for every one of the $\ell$
consensuses, all nodes know exactly when the consensus starts. This assumption is problematic, since due to the early-stopping nature of the algorithm used in \cite{OptimalAmortized}, the different nodes may
terminate each invocation of a consensus at different rounds, leading to a problem with the synchronized starts assumption.
Assuming synchronized starts, the algorithm \simpleAgree can easily (almost without any change) produce the same result; \Ie for $\ell$ consecutive consensuses, \simpleAgree's running time will be $O(f + \ell)$.

In this section we consider two results. First, we analyze \simpleAgree assuming synchronized starts of each consensus, and compare it to the results of \cite{OptimalAmortized}. Second, we augment \simpleAgree with
ideas from \cite{citeulike:3153966} such that \simpleAgree can solve $\ell$ consensuses within $O(f+\ell)$ rounds
even if the initiations of each consensus are not synchronized among the non-faulty nodes.

\subsection{Synchronized Starts}
We start with the assumption that all $\ell$ consensuses have synchronized starts. That is, all non-faulty nodes
start the $i$th (out of $\ell$) instance of \byzantine consensus in the same round. With this assumption in place,
\simpleAgree can be used almost as-is. The single modification required is to perform the initialization state (\lineref{1}) only once for the entire sequence of $\ell$ consensuses (and not once per consensus).

It is easy to see that the following two statements hold also for the modified algorithm:
\beginsmall{enumerate}
  \item Each \byzantine node $z$ can cause non-faulty nodes to disagree on $z$'s gradecasted value at most once throughout the sequence of $\ell$ consensuses;
  \item The number of iterations of a given consensus instance $i$ is $\min \{f+2, t+1\}$, where $f$ is the actual number of \byzantine nodes for which there are non-faulty nodes that do not agree on their gradecasted values during instance $i$.
\endsmall{enumerate}

Using the above statements it is easy to conclude that for a sequence of $\ell$ sequential consensuses, the number of iterations of the modified \simpleAgree is at most $t + 2 \cdot \ell$, since each consensus takes at least 2 iterations to complete, and at most $t$ more iterations are required (depending on the \byzantine nodes' behavior).

Since each iteration is composed of executing gradecast, the total round complexity is $3 \cdot t + 6 \cdot \ell$.
Notice that in each iteration there are at most $n$ gradecasts, each requiring $O(n \cdot t)$ messages. Therefore, the total bit complexity of $\ell$ consensuses is $O\left((t+\ell) \cdot n^2 \cdot t\right) = O\left(n^2 \cdot t^2 + \ell \cdot n^2 \cdot t\right)$. By a simple optimization which defines a subset of $3\cdot t + 1$ as performing the gradecast (and the rest of the nodes just ``listen'' to messages), the total bit complexity is reduced to $O\left(n \cdot t^3 + \ell \cdot n \cdot t^2\right)$.

Note that \simpleAgree is optimal with respect to its \byzantine resiliency and the total running time (\cite{OptimalAmortized} is also optimal with respect to amortized message size and total bits).
Lastly, {\sc Multi-Consensus} is not a complicated algorithm, but \simpleAgree is even simpler.

\subsection{Unsynchronized Starts}
In this subsection we remove the assumption of synchronized starts of the different $\ell$ consensuses. This
is done by implementing ideas of \cite{citeulike:3153966} in a way that is consistent with sequential executions
of \simpleAgree. In \cite{citeulike:3153966} the authors show how to sequentially execute consensuses even if the
consensuses are started at different times and terminate at different times. The reason we cannot use \cite{citeulike:3153966}'s results as-is in our case, is because \cite{citeulike:3153966} assumes that the running time of the
algorithm does not change over time; which does not hold for our solution. Specifically, \simpleAgree's $k$th execution's running time depends on what
occurred in the $k-1$ instances that preceded it.

The required addition to \simpleAgree is as follows:
\beginsmall{enumerate}
  \item If $p$ wants to terminate according to \simpleAgree, it first sends {\it ``done''} to every one, and waits;
  \item In addition, if node $p$ receives $t+1$ {\it ``done''} messages, then $p$ sends all nodes a {\it ``done''} message as well;
  \item Lastly, if node $p$ received $2 \cdot t+1$ {\it ``done''} messages, then $p$ completes the current instance of  \simpleAgree;
\endsmall{enumerate}

Notice that if $i$ is the first round in which some non-faulty node $p$ halts, then all other non-faulty nodes either halt in round $i$, or in round $i+1$. That is, non-faulty nodes terminate within one round of each other (no matter what difference there was between their starting times).

Another addition required is to increase the length of each of \simpleAgree's rounds according to the difference
between different non-faulty nodes' starting times. If the different starting times of each consensus instances is at most $\Delta$ rounds at different nodes, then each iteration of \simpleAgree needs to be increased by a factor of $\Delta$ so that messages of the $i$-th round are received (and considered as messages of the $i$-th round) by all other nodes.

\begin{remark}
  The increased length of each iteration is only with respect to the original \simpleAgree. The additions regarding the ``done'' messages are left as is. Thus, we ensure that non-faulty nodes still terminate within one round of each other.
\end{remark}

To see that the addition does not harm the correctness of \simpleAgree, consider the following. Suppose $p$ starts
running at some round, and $q$ starts running $\Delta$ rounds afterwards. By the correctness proof of \simpleAgree, if $p$ terminates, then one iteration afterwards $q$ will also terminate. But actually, there is an even stronger property: if $p$ terminates then all other non-faulty nodes already have the same value as $p$ (see \claimref{claim:breakloop}). \Ie if $q$ terminates when $p$ terminates, then it will terminate with the same value.

To conclude, notice that the above addition ensures that a non-faulty node halts only if some non-faulty node has
terminated in the execution of \simpleAgree. Thus, from that round onward, the sequence of non-faulty nodes' termination is valid.

The above discussion leads to the following theorem:

\begin{theorem}
  The updated \simpleAgree algorithm solves $\ell$ sequential \byzantine consensus within $O(\Delta \cdot (\ell + t))$ rounds, while not requiring synchronized starts of the different consensus instances.
\end{theorem}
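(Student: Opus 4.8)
The plan is to establish the three assertions of the theorem separately: correctness of each individual consensus (agreement and validity), the bounded termination skew that lets the $\ell$ instances chain together cleanly, and the $O(\Delta \cdot (\ell+t))$ round bound. The guiding principle is that almost nothing new has to be proved about \simpleAgree itself; the work is in showing that the two additions (the $\Delta$-stretching of rounds and the {\it ``done''} amplification) reduce the unsynchronized setting to the already-analysed synchronized one.

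First I would argue that the $\Delta$-stretching makes the execution behave as if it were synchronous. Since the starting times of any given instance differ by at most $\Delta$ rounds across non-faulty nodes, stretching each of \simpleAgree's rounds by a factor of $\Delta$ guarantees that a message a non-faulty node sends in its round $i$ is received, and attributed to round $i$, by every other non-faulty node. Once this simulation is justified, every invariant proved for the synchronized version carries over unchanged: the gradecast properties hold (\claimref{claim:gradecast}), only Byzantine nodes are ever added to $BAD$ (the Corollary following \claimref{claim:BAD}), and — because $BAD$ is performed only once and is never reset between instances — a Byzantine node can create disagreement among non-faulty nodes on its gradecasted value at most once over the entire run of all $\ell$ consensuses (\claimref{claim:BADinc}).

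Next I would handle the termination mechanism. For validity and agreement of a single instance it suffices to show that a non-faulty node completes an instance exactly with the value that \simpleAgree would output. A non-faulty node completes only upon receiving $2t+1$ {\it ``done''} messages, of which at least $t+1$ come from non-faulty nodes, and a non-faulty node first emits {\it ``done''} only when it genuinely breaks the \simpleAgree loop. By \claimref{claim:breakloop}, the moment any non-faulty node breaks the loop all non-faulty nodes already hold the common output value $v$; hence every non-faulty node that later completes does so with this same $v$, and agreement and validity follow from the corresponding properties of \simpleAgree. I would then establish the one-round-skew property by the standard amplification argument: if $i$ is the first round in which some non-faulty node completes, then the $t+1$ non-faulty {\it ``done''} messages it saw reach every non-faulty node within one further round, causing each to relay {\it ``done''} and thus accumulate $2t+1$ {\it ``done''} messages and complete by round $i+1$. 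So all non-faulty nodes complete each instance within one round of one another, which bounds the starting skew of the next instance and keeps $\Delta$ from growing across the sequence.

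Finally I would bound the round complexity. Reusing the synchronized-starts analysis, each instance needs at least two iterations, and the global ``lie at most once'' property caps the total number of extra iterations across all $\ell$ instances by $t$; hence the total iteration count is at most $t + 2\ell$. Each such iteration is a gradecast whose three underlying rounds are stretched by a factor of $\Delta$, contributing $3\Delta\cdot(t+2\ell)$ rounds, while the unstretched {\it ``done''} exchange adds only a constant number of rounds per instance, i.e.\ $O(\ell)$ rounds in total; summing gives $O(\Delta\cdot(\ell+t))$. The step I expect to be the main obstacle is the first one: carefully verifying that the $\Delta$-stretching genuinely simulates a synchronous round under arbitrary starting-time skew, so that the delicate $BAD$-set and gradecast invariants of \simpleAgree remain intact. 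Once that simulation is justified, correctness reduces to the already-proved claims and the complexity reduces to the synchronized-starts iteration count multiplied by $\Delta$.
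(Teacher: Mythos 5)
Your proposal is correct and follows essentially the same route as the paper, whose ``proof'' of this theorem is just the informal discussion preceding it: the $\Delta$-stretching of rounds to simulate synchrony, the {\it ``done''}-message amplification giving termination within one round across non-faulty nodes, the appeal to \claimref{claim:breakloop} to show all nodes already share the output value when the first one halts, and the $t+2\ell$ iteration count from the synchronized-starts analysis multiplied by $\Delta$. If anything, your write-up is more careful than the paper's, which states the one-round-skew and correctness-preservation properties without the explicit quorum-intersection argument you supply.
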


\section{Conclusion}\label{sec:conclude}
We presented a simple algorithm \simpleAgree that uses Gradecast as a building block, and solves the \byzantine
consensus problem within $3 \cdot \min \set {f+2, t+1}$ rounds.

Two variations of \simpleAgree are given. The first one optimally solves the approximate agreement problem and reduces
the message complexity of the best known optimal algorithm from $O(n^k)$ to $O(k \cdot n^3)$.
The second variant of \simpleAgree optimally solves the multi consensus problem and efficiently supports
unsynchronized starts of the consensus instances.

All three algorithms have optimal resiliency, optimal running time (up to a constant multiplicative factor) and have the early stopping property. Aside from their improved simplicity, the two variants also improve (in different aspects) upon previously best known solutions.

\section*{Acknowledgements}
{Michael Ben-Or is the incumbent of the Jean and Helena Alfassa Chair in Computer Science, and he was supported in part by the Israeli Science Foundation (ISF) research grant. Danny Dolev is Incumbent of the Berthold Badler Chair in Computer Science. Danny Dolev was supported in part by the Israeli Science Foundation (ISF) Grant number 0397373. }

\thispagestyle{empty}
\bibliographystyle{plain}
\bibliography{bibliography}

\end{document}